\newtheorem{theorem}{Theorem}[section]
\newtheorem{proposition}[theorem]{Proposition}
\theoremstyle{definition} 
\newtheorem{definition}[theorem]{Definition}
\newtheorem{example}[theorem]{Example}
\renewenvironment{proof}{{\noindent\bfseries Proof.}}{\qed} 
\definecolor{ThistleBlue}{rgb}{102, 153, 255}
\newcommand{\rnk}{\operatorname{rank}}
\title{Exploring the space of graphs with fixed discrete curvatures}
\author[1]{Michelle Roost}
\author[1]{Karel Devriendt\thanks{Contact: karel.devriendt@mis.mpg.de}}
\author[1]{Giulio Zucal}
\author[1,2,3]{J\"{u}rgen Jost}
\affil[1]{Max Planck Institute for Mathematics in the Sciences, Leipzig, Germany}
\affil[2]{Santa Fe Institute, Santa Fe, New Mexico, USA}
\affil[3]{ScaDS.AI, Leipzig, Germany}
\date{\today}
\begin{document}
\maketitle

\begin{abstract}
Discrete curvatures are quantities associated to the nodes and edges of a graph that reflect the local geometry around them. These curvatures have a rich mathematical theory and they have recently found success as a tool to analyze networks across a wide range of domains. In this work, we consider the problem of constructing graphs with a prescribed set of discrete edge curvatures, and explore the space of such graphs. We address this problem in two ways: first, we develop an evolutionary algorithm to sample graphs with discrete curvatures close to a given set. We use this algorithm to explore how other network statistics vary when constrained by the discrete curvatures in the network. Second, we solve the exact reconstruction problem for the specific case of Forman--Ricci curvature. By leveraging the theory of Markov bases, we obtain a finite set of rewiring moves that connects the space of all graphs with a fixed discrete curvature.
\\~\\
\textbf{Keywords:} network geometry, discrete curvature, Markov bases, Markov chain Monte Carlo methods, evolutionary algorithm, edge-based network measures 
\end{abstract} 

\section{Introduction}\label{section: introduction}


Networks are used ubiquitously as a tool to model, analyze and design complex systems in a wide range of domains of science and technology such as neuroscience \cite{fornito_2016_fundamentals}, epidemiology \cite{pastor-satorras_2015_epidemic}, economics \cite{hausmann_2013_atlas, battiston_2012_debtrank}, urban systems \cite{barthelemy_2016_structure}, energy infrastructure \cite{pagani_2013_power} and many more. Across these different disciplines, networks, graphs and their higher-order generalizations \cite{lambiotte_2019_networks, battiston_2021_physics, bick_2023_higher} provide a unifying mathematical language that allows to translate concepts and tools from one setting to another. Moreover, a series of observations at the turn of the 21st century found that the networks extracted from real-world systems often exhibit common features: heavy-tailed degree distributions \cite{barabasi1999emergence}, a ``small world" structure \cite{watts1998collective} and rich mesoscale connectivity patterns such as communities \cite{girvan2002community}. This initiated an ongoing pursuit to find models and theories that explain 
the presence of these features, and develop new mathematical methods to detect, describe and quantify them. We point to \cite{broido_2019_scalefree,voitalov_2019_scalefree, cantwell_2020_thresholding} for a critical perspective on these observations.

Among the many approaches to study complex networks, this article is concerned with geometry and more specifically with discrete curvature as a tool to describe the structure of graphs; see \cite{boguna_2021_network, Wu2014EmergentCN} for an overview of different notions of geometry associated with complex systems and their applications. We now explain in more detail the main topic of the article: discrete curvature.
\\~\\
Curvature is a concept from Riemannian geometry, the study of smooth objects such as surfaces equipped with a metric, and quantifies how much these objects differ locally from being flat. While classical definitions of curvature need the structure of a Riemannian manifold, more recent works have generalized these definitions so they can be applied to other metric spaces such as graphs; this results in discrete notions of curvature --- see Section \ref{section: background, FR and AFR} for a concrete example. 

Notions of discrete curvature that can be applied to graphs were proposed, amongst others, by Forman based on Laplacian operators \cite{forman2003bochner}, by Ollivier, with a modification by Lin, Lu and Yau, based on neighborhood overlap \cite{ollivier2007ricci,ollivier2010survey, OllivTheor2Lin}, by Bakry and \'{E}mery based on the Bochner identity \cite{bakry2013analysis}, by Erbar and Maas based on a certain entropy functional \cite{erbar_2012_ricci}, by Steinerberger based on solutions to an equilibrium problem \cite{steinerberger_2023_curvature}, and by Devriendt and Lambiotte based on the effective resistance \cite{Devriendt_2022,devriendt2023graph}. Importantly, many of these discrete curvatures satisfy properties and results analogous to the powerful theorems from differential geometry; see for instance \cite{OllivTheor1Lin,OllivTheor2Lin, OllivTheor3Bauer,OllivTheor4Jost, OllivTheor5Loisel, forman2003bochner, liu_2018_bakry, devriendt2023graph, steinerberger_2023_curvature}. Of particular relevance to the central question of emergent structure and organization in complex systems, we note that many notions of discrete curvature satisfy powerful `local-to-global' theorems, where local constraints on the curvature lead to statements on the global structure of the graph.

As mentioned, most notions of discrete curvature naturally apply to more general spaces than graphs. This is also the case for the curvature that we mainly focus on in this article, Forman's discrete Ricci curvature, which was defined in \cite{forman2003bochner} in the setting of CW  complexes. Graphs are special cases of such complexes. 

Aside from a rich mathematical theory, discrete curvature has also found successful applications as a tool to analyze and characterize the structure of complex real-world networks. It has been used for instance in the context of community detection \cite{ComDet,Gosztolai2021UnfoldingTM,fesser_2023_augmentations}, in the study of biological networks \cite{OllivAppcancNet, Eidi2020EdgebasedAO}, in characterizing connectivity patterns in brain networks \cite{CurvAutism}, in studying market instabilities in financial networks \cite{CurvMarketInstab}, and to detect and address the problem of oversquashing in graph neural networks \cite{topping_2022_understanding,bober2022rewiring}. Finally, graph curvature also plays an important role in combinatorial quantum gravity, that is, in attempts to unify Einstein’s theory of general relativity with the models of elementary particle physics by using a random graph as a version of a discrete space-time structure at the Planck scale, see \cite{trugenberger2017combinatorial,universe9120499}.
\\
~
\\
In this article, we consider the following question: 
\textit{How to construct graphs with prescribed discrete curvatures?} To our knowledge, this inverse problem has not yet been addressed for any notion of discrete curvature, and we consider it a basic methodological gap in trying to understand how the geometry of a graph or network influences its other features. As a first step in addressing this methodological gap, this article provides two different graph reconstruction approaches for specific choices of discrete curvatures. We furthermore explore, both experimentally and theoretically, the resulting space of graphs with fixed discrete curvatures.

In the first part, Section \ref{section: approximate sampling}, we consider the problem of \emph{approximate reconstruction} and develop a methodology that works for any notion of discrete curvature. We use a Markov chain Monte Carlo-type algorithm (MCMC) to iteratively construct graphs with discrete curvatures close to a target set of curvatures. We then use this algorithm to explore the ensemble of graphs with a given discrete curvature sequence in a number of experiments. For computational reasons, our discussion focuses on two simple notions of discrete curvature, Forman--Ricci and augmented Forman--Ricci curvature, but we have verified the methodology for other curvatures as well.

In the second part, Section \ref{section: markov bases}, we take a more theoretical approach and consider the problem of \emph{exact reconstruction} of graphs from a given target Forman--Ricci curvature sequence. In this specific case, the problem reduces to a combinatorial question involving the so-called joint degrees in the graph; see for instance \cite{stanton_2012_constructing}. Using known characterizations of possible joint degrees in combination with the theory of Markov bases \cite{MarkovBases}, we show that there exists a finite set of graph rewiring moves that connect any two graphs with the same Forman--Ricci curvatures and node degrees; this is the content of Theorem \ref{th: markov moves and transpositions}. The set of rewiring moves can be computed using existing computer algebra packages, and we give explicit examples for the case of graphs with a small maximum degree. 
Our result naturally leads to an ergodic Markov chain over the set of all graphs with equal Forman--Ricci curvatures and degrees. 

After finishing our work, we found that the recent preprint \cite{almendrahernández2024irreducible} 
considers, in a different context, a question related to our \emph{exact reconstruction problem} and develops a similar approach. The authors make use of Markov basis techniques to explore the space of node-colored graphs that have fixed node degrees and a fixed number of edges whose end points have a given pair of colors. They prove results on the graph rewiring moves (i.\,e., the Markov basis) that solves this problem and, in the case of simple graphs, show that the complexity of these moves grows with the number of colors; this is in line with our experimental observation in Theorem \ref{th: markov moves and transpositions} that the number of rewiring moves grows quickly with the maximum degree. 
The problem in \cite{almendrahernández2024irreducible} is distinct from ours, and their approach and solution does not address the problem considered in this article, but we believe that the concurrency of both articles illustrates the potential of using Markov basis techniques to explore spaces of graphs with various properties and constraints. 
\\
~
\\
\textbf{Organization:} The rest of this article is organized as follows. Section \ref{section: background, FR and AFR} introduces the relevant definitions from graph theory and discrete curvature and formalizes the reconstruction question. In Section \ref{section: approximate sampling}, we propose an evolutionary MCMC algorithm for approximate reconstruction from a target curvature sequence. We then use the algorithm in a number of experiments and describe network statistics of graphs sampled using the algorithm. In Section \ref{section: markov bases}, we consider the exact reconstruction problem. We formalize the problem and solve it using the theory of joint degree matrices and Markov bases. Section \ref{section: conclusion} concludes the article with a summary of the results and an outlook towards future applications and extensions.
\section{Background and Forman--Ricci curvatures}\label{section: background, FR and AFR}
We consider simple, unweighted, undirected, 
finite graphs $G=(V,E)$ where $V$ is the set of \textit{nodes} and $E$ is the set of \textit{edges}, which are pairs $\lbrace u,v\rbrace\in E$ of distinct nodes that are connected in the graph. As mentioned in the introduction, we will characterize the structural properties of graphs by considering their discrete curvature. Our focus will be on two definitions of discrete Ricci curvature on edges.
Forman--Ricci curvature was originally defined for CW complexes \cite{forman2003bochner}; roughly speaking, these are topological spaces made up of a collection of `cells' (topological balls) of different dimensions, with cells of dimension $k$ attached along their boundary to cells of dimension $k-1$, in a recursive way. Formula (7.1) in \cite{forman2003bochner} gives the curvature of a $1$-dimensional cell $e$, i.e., an edge, in a CW complex as
\begin{equation}\label{fr1}
  F(e) = \sharp (f^{(2)}>e)+2 -\sharp (\text{parallel neighbors of }e).
\end{equation}
Here, $f^{(2)}>e$ means that $f^{(2)}$ is a 2-cell containing the edge $e$ in its boundary, and an edge $\tilde{e}\neq e$ is a parallel neighbor of $e$ if either $\tilde{e}$ shares a node with $e$, or there is a $2$-cell containing both $e$ and $\tilde{e}$ in its boundary, but not both. In the case of an unweighted, undirected, 
finite graph $G$, which is a 1-dimensional CW complex and thus does not contain any 2-cells, all edges that share a node with $e$ are 
parallel neighbors. Consequently, for an edge $e=\{u,v\}$ in a graph, equation \eqref{fr1} becomes
\begin{equation}\label{eq: definition Forman curvature}
F(e) = 4 - \deg(u) - \deg(v).
\end{equation}
The degree $\deg(u)$ counts all edges containing $u$, that is, $e$ itself and the neighbors of $e$ that share the node $u$. Expression \eqref{eq: definition Forman curvature} is called the \textit{Forman--Ricci curvature} of $e$ in the graph.

The clique complex of a graph is the simplicial complex (and thus CW complex) whose simplices are given by all complete subgraphs (cliques) in the graph. In particular, the $2$-cells in the clique complex of a graph are given by the triangles in the graph. Since the edges that form a triangle with $e$ are not parallel edges of $e$, equation \eqref{fr1} for an edge $e=\{u,v\}$ in the clique complex of a graph becomes
\begin{equation}\label{eq: definition augmented Forman curvature}
F^{\#}(e) = 4 - \deg(u) - \deg(v) + 3\mathcal{T}(e),
\end{equation}
where $\mathcal{T}(e)$ is the number of triangles in the graph that contain $e$. Expression \eqref{eq: definition augmented Forman curvature} was called the \emph{augmented Forman--Ricci curvature} of $e$ in the graph $G$ in \cite{Areejit_2018_comparative}.

The quantity $F^{\#}$ is often used because it can be computed efficiently while still capturing the local geometry of a graph well; for instance, it correlates highly with other notions of discrete curvature such as Ollivier--Ricci curvature \cite{Areejit_2018_comparative} and Bakry--\'{E}mery curvature \cite{mondal_2024_bakry}. An alternative approach is to consider the Forman--Ricci curvature of an edge in the $2$-dimensional CW complex obtained by adding a $2$-dimensional cell for every cycle in the graph, up to a certain length. For cycles up to length three this corresponds to $F^{\#}$, and when considering cycles of length four and more, this involves a tradeoff between expressivity and computational complexity 
see for instance \cite{jost_2021_characterizations, tee_2021_enhanced, fesser_2023_augmentations} for more on this. Figure \ref{fig: example FR AFR} shows a small graph with the Forman--Ricci and augmented Forman--Ricci curvatures computed for all edges.
\begin{figure}[h!]
    \centering
    \includegraphics[width=0.6\textwidth]{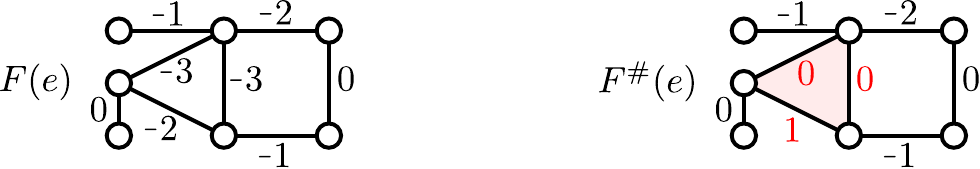}
    \caption{The Forman--Ricci curvatures $F(e)$ and augmented Forman--Ricci curvatures $F^{\#}(e)$ in a graph. The differences due to the presence of a triangle are highlighted in red.}
    \label{fig: example FR AFR}
\end{figure}
\\
For a given graph $G$, the \textit{curvature sequence} is the multiset (set with possible repeats) of all discrete curvatures of its edges. For instance, the Forman--Ricci curvature sequence is the multiset $\lbrace F(e)\rbrace_{e\in E}$, which for the example in Figure \ref{fig: example FR AFR} equals $\lbrace -3,-3,-2,-2,-1,-1,0,0\rbrace$. We can now make the main methodological question of this article more precise:
\begin{equation}\tag{Q1}\label{eq: main question}
\text{\emph{Given a multiset $\mathcal{C}^{\star}$ of numbers, construct a graph with curvature sequence $\mathcal{C}^\star$.}}
\end{equation}
The answer to this question can then be used to address a second goal of the article: to explore the space of all graphs that have a given discrete curvature. 

Section \ref{section: approximate sampling} addresses question \eqref{eq: main question} algorithmically by proposing an algorithm that samples graphs $\hat{G}$ such that $\mathcal{C}(\hat{G})\approx \mathcal{C}^{\star}$, for any notion of discrete curvature. 
Section \ref{section: markov bases} addresses question \eqref{eq: main question} theoretically for the case of Forman--Ricci curvature. We show that there exists a finite set of graph operations, such that any two graphs $G,G'$ with the same Forman--Ricci curvature sequence $\mathcal{C}(G)=\mathcal{C}(G')$ and degree sequence can be transformed into each other by using these graph operations. Throughout the article, we limit our scope to choices of $\mathcal{C}^\star$ for which there exists at least one graph $G$ with curvature sequence $\mathcal{C}^\star$. Characterizing the multisets that are realizable as discrete curvature sequences of a graph is an interesting problem in itself, and it is still unexplored for most notions of discrete curvature.
\section{Approximate graph reconstruction from curvatures}\label{section: approximate sampling}
The main methodological question of this article \eqref{eq: main question} asks to construct graphs with predefined discrete curvatures. Similar to reconstructing graphs from statistics such as motif counts or eigenvalues \cite{ipsen_2002_evolutionary}, this is a very difficult task in general and it highly depends on the choice of discrete curvature. Therefore, this section proposes an algorithm for \emph{approximate reconstruction} that works for any choice of curvature. For a given multiset $\mathcal{C}^\star$, the algorithm samples graphs $\hat{G}$ such that $\mathcal{C}(\hat{G})\approx\mathcal{C}^\star$. We use this algorithm to explore the ensemble of graphs with curvatures close to a given target sequence. In particular, we study how a number of standard network summary statistics vary over this ensemble. 
\subsection{Evolutionary algorithm for sampling graphs with given curvatures}\label{subsection: description of algorithm}
We follow the methodology of \cite{ipsen_2002_evolutionary} and use an evolutionary Markov chain Monte Carlo-type algorithm (MCMC) that consists of three steps: \textit{initialization, mutation and selection}. The algorithm starts with an initial graph $G_0$ which is then evolved iteratively to make its curvature sequence gradually closer to the target sequence. After choosing a notion of discrete curvature of interest, the algorithm takes three inputs: (i) the number of nodes $n$ and the target sequence $\mathcal{C}^\star$, which is a multiset of numbers, (ii) a parameter $\theta$ which introduces variability to avoid the algorithm from getting stuck and (iii) a parameter $T$, which is the number of steps after which the algorithm halts. In practice we choose the parameters $\theta$ and $T$ for a given sequence using grid search. 
We now describe the algorithm steps in more detail.
\\
\textbf{$\blacktriangleright$ Initialization:} Construct a random graph $G_0$ with $n$ nodes and $m:=\vert\mathcal{C}^\star\vert$ edges, starting from a random tree (to guarantee connectedness) and with the remaining edges randomly added while avoiding multi-edges. The iteration parameter $t$ is set to zero.
\\
\textbf{$\blacktriangleright$ Mutation:} Select a random edge $\lbrace u,v\rbrace\in E(G_t)$ in the graph and construct a graph $G'$ in which the edge $\lbrace u,v\rbrace$ is rewired to $\lbrace u, v'\rbrace$, where $v'$ is chosen uniformly at random while avoiding isolated nodes and multi-edges.
\\
\textbf{$\blacktriangleright$ Selection:} Compute the discrete curvature sequences $\mathcal{C}(G_t)$ and $\mathcal{C}(G')$ and compare with the target curvature sequence by computing the mean squared error\footnote{We found that choosing a different measure to compare the curvature sequences does not have much influence on the convergence speed of the algorithm.}:
$$
\text{MSE}_t = \frac{1}{m}\sum_{i=1}^m(\mathcal{C}(G_t)_i - \mathcal{C}^\star_i)^2\quad\text{~and~}\quad\text{MSE}' = \frac{1}{m}\sum_{i=1}^m(\mathcal{C}(G')_i - \mathcal{C}^\star_i)^2,
$$
where the curvature sequences are sorted in ascending order. The mutation $G'$ is accepted if it results in a decrease in the mean squared error $\Delta\text{MSE}=\text{MSE}'-\text{MSE}_t<0$, or otherwise it is accepted with probability $p=\exp(- \Delta\text{MSE}/(\text{MSE}_t \cdot \theta))$. Finally, set $t\leftarrow t+1$ and $G_t\leftarrow G'$ if the mutation is accepted or $G_t\leftarrow G_{t-1}$ otherwise, and stop the algorithm after $T$ total iterations. 

Figure \ref{fig:loss_evolution_celegans} illustrates the evolution of the curvature sequences throughout the algorithm. It shows the histogram of the target sequence $\mathcal{C}^\star$ on the top left and an example of the evolution of the loss function with respect to the number of accepted mutations on the bottom 
left. The label A denotes the initial unmutated random graph $G_0$ as explained above in the 
\emph{Initialization} step. The label D 
denotes the final mutated graph 
obtained after $T=\num{2e6}$ total iterations, of which $\num{16e4}$ mutations were accepted. The respective curvature histograms from A to D are shown in the blue box on the right side of the figure. With respect to the target curvature sequence, the MSE evolves from the random starting configuration A with $\text{MSE}_{\text{A}} \approx 8469$, over B and C with $\text{MSE}_{\text{B}} \approx 6269$, and $\text{MSE}_{\text{C}} \approx 437$, to the final evolved configuration with $\text{MSE}_{\text{D}} \approx 0.29$. We note the good convergence of the final curvature sequence $\mathcal{C}(G_T)$. 

While this methodology is applicable to any choice of discrete curvature, we focus on Forman--Ricci and augmented Forman--Ricci curvature for computational reasons in our experiments. We found that the algorithm still converges for Ollivier--Ricci curvature and resistance curvature target sequences, but with prohibitively long computation times.
\sisetup{tight-spacing=true}
\begin{figure}[h!]
\centering
\includegraphics[width=1\textwidth]{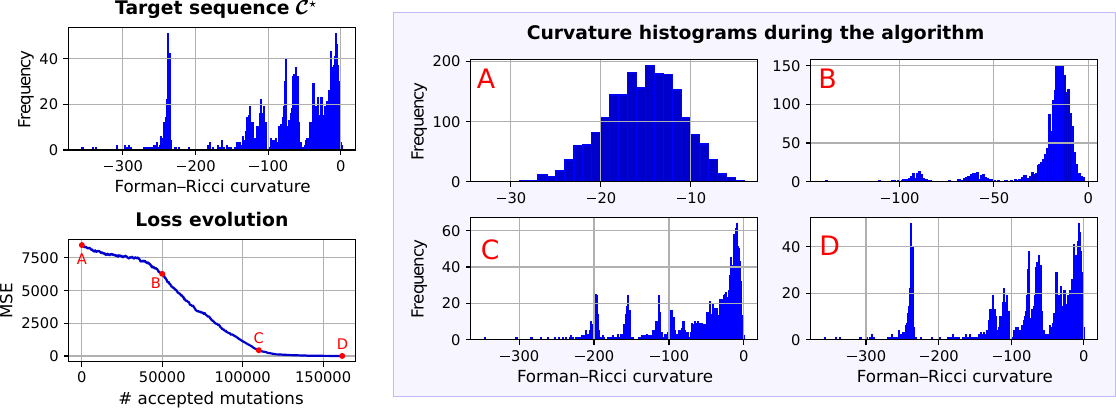}
\caption{{Evolution of curvature sequences throughout the sampling algorithm.} \textbf{Top left:} Histogram of the target sequence $\mathcal{C}^\star$. This is the Forman--Ricci curvature sequence of the \emph{C. elegans} metabolic network (see Section \ref{subsubsection: target sequences}). \textbf{Bottom left:} Evolution of the mean squared error MSE$_t$ between the target sequence and the Forman--Ricci curvature sequences. \textbf{Highlight box:} Histogram of Forman--Ricci curvatures $\mathcal{C}(G_t)$ at four points during the algorithm: (A) the initial graph (B) after $\num{5e4}$ accepted mutations, (C) after $\num{11e4}$ accepted mutations and (D) at the end of the algorithm, after $\num{16e4}$ accepted mutations and $T=\num{2e6}$ total iterations.}
\label{fig:loss_evolution_celegans}
\end{figure}
\subsection{Approximate sampling experiments}\label{subsection: sampling experiments}
We now use the algorithm described in Section \ref{subsection: description of algorithm} to explore ensembles of graphs with a given discrete curvature sequence. For both Forman--Ricci and augmented Forman--Ricci curvature, we consider four target curvature sequences and sample $1000$ graphs. We then compute a number of network summary statistics and explore how they vary over these graphs.
\subsubsection{Target sequences}\label{subsubsection: target sequences}
To guarantee target sequences which are realizable, we use curvature sequences of existing graphs. We consider graphs obtained from three widely-studied random graph models \cite{newman2010networks} and one real-world network. The considered graphs are as follows:
\begin{itemize}
    \item \textbf{Erd\H{o}s--R\'{e}nyi (ER) random graph.}
     The graph has $n=500$ nodes and every pair of nodes is connected independently with probability $0.01$. These graphs result in a \emph{unimodal target sequence}.
    \item \textbf{Stochastic block model (SBM) random graph}. The graph has $n=500$ nodes divided in two equal sized groups. Pairs of nodes in group $1$ are connected with probability $p=0.05$, in group $2$ with probability $p=0.01$ and pairs of nodes from different groups with probability $p=0.003$. These graphs result in a \emph{bimodal target sequence}.
    
    \item \textbf{Barab\'{a}si--Albert (BA) random graph.} The graph has $n=500$ nodes and is constructed by starting from a star graph with four nodes and then adding new nodes. Each new node is connected to three of the existing nodes with probability proportional to their degrees. These graphs result in a \emph{heavy-tailed target sequence}.    

    \item \textbf{Real-world network}. We consider the metabolic network of the \emph{Caenorhabditis elegans} worm (\emph{C. elegans}) \cite{kunegis2013konect,duch2005community,jeong2000large,overbeek2000wit}, this is a well-studied graph in biology. This graph has $n=453$ nodes and $m=2025$ edges and results in a \emph{heterogeneous/complex target sequence}.
\end{itemize}

\begin{figure}[h!]
\centering
\includegraphics[width=0.95\textwidth]{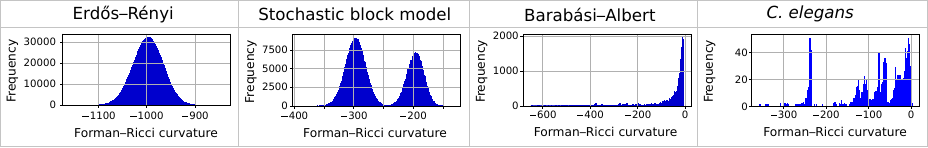}
\caption{Histograms of the Forman--Ricci curvature of an Erd\H{o}s--R\'{e}nyi, stochastic block model and Barab\'{a}si--Albert random graph and the \emph{C. elegans} metabolic network.}
\label{fig:example_forman_random_models}
\end{figure}
\subsubsection{Summary statistics}
To summarize the structure of the graphs obtained from our algorithm, we compute four global network statistics: 
\begin{itemize}
    \item \textbf{Number of triangles:} the number of triangles in a graph is a simple but important summary statistic. It contains information about the density of the graph and it is closely related to transitivity \cite{wasserman1994social} and discrete curvature \cite{OllivTheor4Jost}.

    \item \textbf{Average clustering coefficient:} the clustering coefficient $C_v$ of a node $v$ is the number of triangles that contain $v$ divided by the number of possible triangles $\binom{\deg(v)}{2}$. The average clustering coefficient $\frac{1}{n}\sum C_v$ is an important graph property that reflects its geometry \cite{OllivTheor4Jost, krioukov_2016_clustering}.

    \item \textbf{Average shortest path length:} this is the average, over all pairs of nodes $u$ and $v$, of the smallest number of steps between $u$ and $v$. It is an intuitive summary of the metric structure of a graph.

    \item \textbf{First positive eigenvalue:} the normalized graph Laplacian is the $n\times n$ matrix \linebreak $L=I-D^{-1}A$ where $I$ is the identity matrix, $D$ is the diagonal degree matrix and $A$ is the adjacency matrix. The eigenvalues of the Laplacian contain a lot of information about the graph structure and dynamical processes such as diffusion and random walks associated with the graph \cite{chung_spectral_1997, mulas_2022_graphs}. In particular, $L$ is a positive semidefinite matrix and its first positive eigenvalue $\lambda_2$ quantifies how well-connected the graph is.
    \end{itemize}
\subsubsection{Experiments}
Before sampling the ensemble of graphs, we determine the optimal parameters $\theta$ and $T$ for each of the target sequences. Table \ref{table:theta_gridsearch} shows the parameter values obtained via grid search. 
\begin{table}[h!]
\renewcommand{\arraystretch}{1.5}
\centering
\begin{tabular}{|c|c|c|c|c|c|}
\hline
 \multicolumn{2}{|c|}{\phantom{-}} & ER & SBM & BA & \emph{C. elegans}
 \\
 \hline
 \multirow{2}{*}{\shortstack{Forman--Ricci curvature\\ sequence}}
 & $\theta$ & \phantom{-.}\num{3e-3} & \phantom{-.}\num{3e-3} & \phantom{-.}\num{3e-3} & \phantom{-.}\num{8e-4}
 \\
 & $T$ & \num{1e6} & \num{1e6} & \num{1e6} & \num{2e6}
 \\
 \hline
 \multirow{2}{*}{\shortstack{augmented Forman--Ricci\\ curvature sequence}}
 & $\theta$ & \phantom{-.}\num{3e-3} & \phantom{-.}\num{3e-3} & \phantom{-.}\num{4e-3} & \phantom{-.}\num{6e-4}
 \\
 & $T$ & \num{2e6} & \num{2e6} & \num{2e6} & \num{2.5e6}\phantom{...}
 \\
 \hline
\end{tabular}
\caption{Optimal $\theta$ and $T$ parameters determined via grid search.}
\label{table:theta_gridsearch}
\end{table}

We now consider the four graphs (ER, SBM, BA, \emph{C. elegans}), compute their Forman--Ricci curvature sequence as a target sequence $\mathcal{C}^\star$ and then compute $1000$ graphs with $\mathcal{C}(\hat{G})\approx\mathcal{C}^\star$ using our algorithm. As an indication of the quality of the approximation, Table \ref{table:mse} shows the mean squared errors for the sampled graphs. Generally, we observe that the MSE increases as the graphs become less uniform, as is the case for the Barab\'{a}si--Albert and the \emph{C. elegans} network.
\\~\\
Figure \ref{fig:graph_properties_forman} shows the histograms of the network statistics (in blue) and the value of this statistic for the starting graph (red line). For each statistic, we also plot the \textit{standard score} $\tfrac{\vert s^\star-\langle s\rangle\vert}{\sigma}$, where $s^\star$ is the value of the statistic for the target graph, $\langle s\rangle$ is the ensemble average of the statistic and $\sigma$ is the standard deviation of the statistic over the ensemble. This number is a proxy for how typical or atypical the statistics of the target graph are with respect to the ensemble.

\begin{figure}[h!]
\centering
\includegraphics[width=1\textwidth]{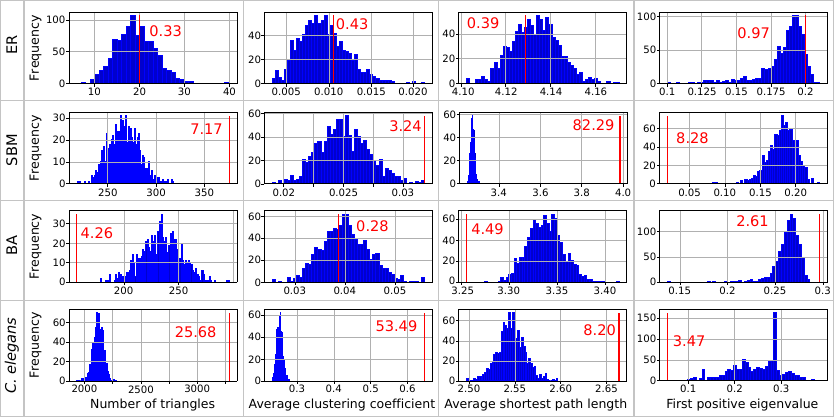}
\caption{Network statistics for graphs sampled using the algorithm from Section \ref{subsection: description of algorithm}, based on four target \textit{Forman--Ricci curvature sequences} (ER, SBM, BA, \emph{C. elegans}). The histograms of the observed statistics in the sampled graph ensemble are shown in blue and the red line indicates the statistic of the target graph together with its standard score.} 
\label{fig:graph_properties_forman}       
\end{figure}

We note a few observations: the standard scores for the ER-based target sequence are all below $1$. This suggests that the ensemble of graphs obtained from the algorithm is not very different from an ensemble of ER random graphs. Second, for the SBM-based target sequence, we see that the average shortest path length is much smaller for the sampled graphs than for the target graph, and that the first positive eigenvalue is much larger. Both observations suggest that the block structure of the SBM graph, i.\,e., two groups of nodes which are poorly connected, is not well-reproduced by the sampled graphs. Finally, for the \emph{C. elegans}-based target sequence, we see that all standard scores are very high. This implies that the structure of this graph is very different from the sampled graphs. This is not surprising since only fixing the degree sums of edges is a very local and coarse structural constraint and therefore, discrete curvature alone is not capable of reproducing the other, more global graph statistics. Our current approach could be extended by including further graph statistics and properties in the loss function, which should result in sampling from an ensemble that more closely resembles the target graph.
\\~\\
We repeat the same experiment with the augmented Forman--Ricci curvature. The mean squared error between the sampled and target curvature sequences are shown in Table \ref{table:mse} and the MSE is again higher for the Barab\'{a}si--Albert graph and the \emph{C. elegans} network. The results of the experiment are shown in Figure \ref{fig:graph_properties_augmented}.
\begin{figure}[h!]
\centering
\includegraphics[width=1\textwidth]{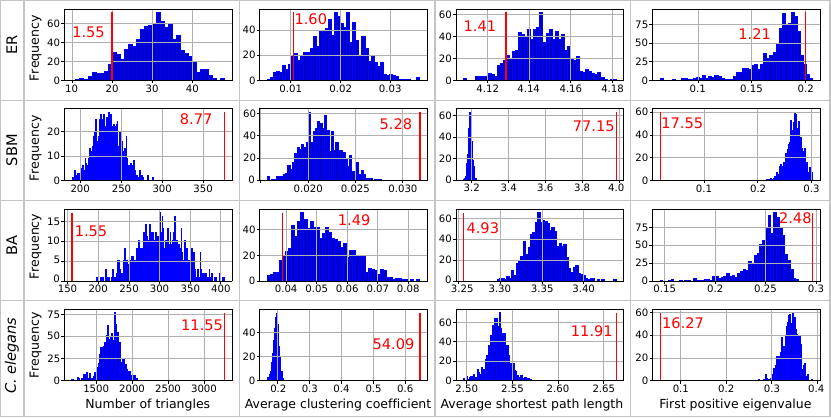}
\caption{Network statistics for graphs sampled using the algorithm from Section \ref{subsection: description of algorithm}, based on four target \textit{augmented Forman--Ricci curvature sequences} (ER, SBM, BA, \emph{C. elegans}). The histograms of the observed statistics in the sampled graph ensemble are shown in blue and the red line indicates the statistic of the target graph together with its standard score.}
\label{fig:graph_properties_augmented}       
\end{figure}

We note a few observations: compared to Figure \ref{fig:graph_properties_forman}, the standard scores for the ER-based target sequence have all increased. This is not unexpected, since the graphs are sampled based on the augmented Forman--Ricci curvature which takes into account the presence of triangles in the graph, whereas ER graphs have no correlations between different edges. Secondly, the same observation as before holds for the SBM-based target sequence: the smaller shortest path lengths and larger first positive eigenvalue suggest that the two-block structure of the SBM is not reproduced in the sampled graphs. Similar to the first experiments, using the Forman--Ricci curvature for graphs and including the triangle count is a local and unrefined structural constraint and is therefore not capable of reproducing all graph statistics.

\begin{table}[h!]
\renewcommand{\arraystretch}{1.5}
\centering
\begin{tabular}{|c|c|c|c|c|c|c|c|c|}
\hline
\multirow{2}{*}{MSE} & \multicolumn{4}{c|}{Forman} & \multicolumn{4}{c|}{Augmented Forman} \\
 & ER & SBM & BA & \emph{C. elegans} & ER & SBM & BA & \emph{C. elegans} \\
\hline
\hline
min  & 0.0   & 0.016 & 0.038 & 0.111 & 0.018 & 0.153 & 0.195 & 15.539 \\
max  & 0.005 & 0.039 & 0.193 & 0.416 & 0.191 & 0.285 & 3.757 & 21.469 \\
mean & 0.001 & 0.027 & 0.088 & 0.033 & 0.094 & 0.212 & 1.115 & 18.188 \\
\hline
\end{tabular}
\caption{Minimal, maximal and mean of the MSE between $\mathcal{C}^\star$ and $\mathcal{C}(G_T)$ for the $1000$ sampled graphs in the experiments of Figures \ref{fig:graph_properties_forman} and \ref{fig:graph_properties_augmented}.}
\label{table:mse}
\end{table}

Finally, we note that one should take care in interpreting the histograms of network statistics, since the specific design of our algorithm does not give much control or insight into the distribution from which we sample graphs. Furthermore, we do not have any mixing time results on which we can base our choice of stopping parameter $T$. This makes it hard to control the influence of the initialization $G_0$ on the ensemble of graphs which we sample from. In future work, these issues could be addressed by designing a more tailored algorithm or by trying to find exact expressions for the probability distribution over graphs, for instance using a soft maximum entropy model \cite{squartini_2017_maximumentropy}.

Our algorithm is designed for reconstructing graphs based on a complete target curvature sequence. In many practical situations, however, such complete knowledge of the curvature, or other properties 
of all edges is not realistic or feasible, and further steps have to be taken to account for this incomplete information. For instance, \cite{arnold2022reconstructing} takes a Bayesian approach to address the problem of reconstructing graph statistics such as degree and triangle distributions from incomplete (sampled) edge information alone.

\section{Exact graph reconstruction from curvatures and degrees}\label{section: markov bases}
In this section, we take a second approach to the main question \eqref{eq: main question} and consider the \emph{exact reconstruction problem} of finding graphs whose discrete curvatures precisely match a given multiset of numbers. This problem clearly depends on the notion of discrete curvature, and we will focus on Forman--Ricci curvature $F(\lbrace u,v\rbrace)=4-\deg(u)-\deg(v)$. For reasons which will be explained later, we not only consider fixed curvatures but also fixed degrees. 

In the main result of this section, Theorem \ref{th: markov moves and transpositions}, we find that there exists a finite set of graph rewiring moves which can transform any two multigraphs with the same Forman--Ricci curvatures and node degrees into each other. This result follows from encoding the degree and curvature constraints algebraically using a joint degree matrix and using the theory of Markov bases. Our approach then decomposes into two parts: first, finding all joint degree matrices that are compatible with the curvature and degree constraints and second, finding all graphs with the same joint degree matrix. 
\subsection{Definitions and notation}
\noindent The rest of this section will use the following conventions and definitions. A \emph{multigraph} can have multi-edges and self-loops while a \textit{simple graph} cannot, and we write ``graph'' if the distinction is not relevant. For a graph $G=(V, E)$ with maximum degree $\Delta$, we denote the set of nodes with degree $a$ as $V_a = \{v \in V(G): \deg(v) = a\}$. The cardinality $\vert V_a\vert$ of this set is the number of nodes with degree $a$, and we define the \emph{degree frequencies} of a graph as the $\Delta$-length tuple
$$
(\vert V_1\vert\,,\,\vert V_2\vert\,,\, \dots \,,\, \vert V_\Delta\vert) 
\,=\, (\text{\# nodes of degree $a$})_{a=1}^{\Delta}
$$ 
that contains these cardinalities. 
Example \ref{example: degree and curvature frequencies} below shows this for a small graph.

We denote the set of edges whose end point degrees sum to $\kappa$ as $E_\kappa=\{ \{u,v\}\in E(G):\deg(u)+\deg(v)=\kappa\}$; this is also the set of edges with Forman--Ricci curvature equal to $4-\kappa$ as $E_\kappa = \{\{u,v\}\in E(G):F(\{u,v\})=4-\kappa\}$. The cardinality $\vert E_\kappa\vert$ of this set is thus the number of edges with Forman--Ricci curvature $4-\kappa$. We define the \emph{curvature frequencies} of a graph as the $2\Delta$-length tuple 
$$
(\vert E_1\vert\,,\, \vert E_2\vert\,,\,\dots\,,\,\vert E_{2\Delta}\vert) 
\,=\, \text{(\# edges with Forman--Ricci curvature $4-\kappa$)$_{\kappa=1}^{2\Delta}$.}
$$
Since $\vert E_1\vert=0$ always, the first entry in the curvature frequencies tuple is zero for every graph. We note that the curvature frequencies of a graph contain the same information as the Forman-Ricci curvature sequence $\mathcal{C}(G)$, but encoded in a different way. Example \ref{example: degree and curvature frequencies} below shows the curvature frequencies computed for a small graph.

\begin{example}\label{example: degree and curvature frequencies}
The graph in Figure \ref{figure: degree and curvature frequencies} below has maximum degree $\Delta=4$ and its degree and curvature frequencies are shown on the right. The underlined third entry $(a=3)$ of the degree frequencies is $2$, which indicates that there are $\vert V_3\vert=2$ nodes with degree $3$. The underlined sixth entry $(\kappa=6)$ of the curvature frequencies is $2$, which indicates that there are $\vert E_6 \vert = 2$ edges with degree sum $\kappa=6$ and thus Forman--Ricci curvature $4-\kappa=-2$. This graph will be the running example in the remainder of this section.
\begin{figure}[h!]
    \centering    \includegraphics[width=0.8\textwidth]{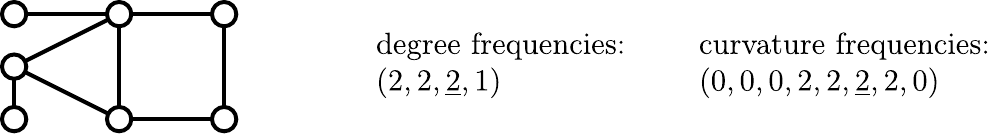}
    \caption{A graph with its degree and curvature frequencies.}
    \label{figure: degree and curvature frequencies}
\end{figure}
\end{example}
\subsection{Joint degree matrix and transpositions}
The Forman--Ricci curvature of an edge depends on the sum of the degrees of its end points. This information can be represented by the following matrix: the \emph{joint degree matrix} (JDM) of a graph $G$ is a symmetric $\Delta\times\Delta$ matrix $J$ with entries
\begin{align*}
J_{ab}=J_{ba} &= \left\vert\big\lbrace \lbrace u,v\rbrace\in E(G)\,:\, \lbrace\deg(u),\deg(v)\rbrace=\lbrace a,b\rbrace\big\rbrace\right\vert
\\
&= \text{~\# edges whose end points have degrees $a$ and $b$.}
\end{align*}
We note that the degree and curvature frequencies of a graph can be computed from its JDM: the number of nodes with degree $a$ is equal to the row sum $\vert V_a\vert = \tfrac{1}{a}(\sum_{b} J_{ab}+J_{aa})$ 
and the number of edges with Forman--Ricci curvature $4-\kappa$ is equal to the anti-diagonal sum $\vert E_\kappa\vert=\sum_{a\leq b, a+b=\kappa} J_{ab}$. Figure \ref{fig: example JDM and sequences} shows the JDM and associated frequencies for the graph in Example \ref{example: degree and curvature frequencies}. 
\begin{figure}[h!]
    \centering
    \includegraphics[width=1\textwidth]{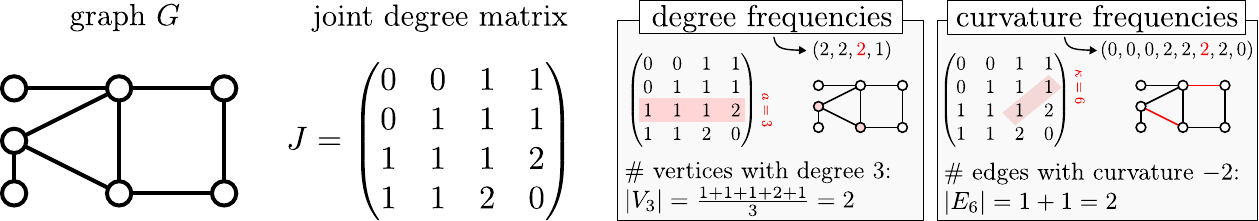}
    \caption{A graph $G$ and its joint degree matrix (JDM). The figures on the right show the degree and curvature frequencies of $G$ and highlight how these can be obtained from the JDM, as follows: $\vert V_3\vert = \tfrac{1}{3}(\sum_{b}J_{3b}+J_{33})$ and $\vert E_6\vert = \sum_{a\leq b, a+b=6} J_{ab}$.}
    \label{fig: example JDM and sequences}
\end{figure}

The following result from \cite{stanton_2012_constructing} gives a complete characterization of JDMs.
\begin{theorem}[{\cite[Thm. 4.2]{stanton_2012_constructing}}]\label{thm: characterization of JDMs}
A symmetric matrix $J\in\mathbb{N}^{\Delta\times\Delta}$ is the JDM of a simple graph if and only if it satisfies, for all distinct $a,b\in\lbrace 1,\dots,\Delta\rbrace$, the conditions
\begin{itemize}
    \item[(i)] $\vert V_a\vert = (\sum_{c=1}^\Delta J_{ac}+J_{aa})/a$\,\, is an integer,
    \item[(ii)] $J_{ab}\leq\vert V_a\vert\cdot\vert V_b\vert$, and
    \item[(iii)] $J_{aa}\leq \binom{\vert V_a\vert}{2}$.
\end{itemize}
\end{theorem}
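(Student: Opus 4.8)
The plan is to prove this characterization of joint degree matrices by establishing both directions of the "if and only if" statement. The necessity of conditions (i)--(iii) is essentially immediate from the definitions, so the substance of the proof lies in the sufficiency direction: given a symmetric matrix $J\in\mathbb{N}^{\Delta\times\Delta}$ satisfying (i)--(iii), I must construct a simple graph realizing $J$ as its JDM.

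Let me think about what these conditions encode. The JDM records edge counts between degree classes. $|V_a|$ is the number of nodes of degree $a$. Condition (i) is the obvious divisibility/integrality constraint. Condition (ii) says you can't have more edges between the degree-$a$ class and degree-$b$ class than the complete bipartite graph between them ($|V_a|\cdot|V_b|$ edges). Condition (iii) is the analogous constraint within a single class $V_a$: at most $\binom{|V_a|}{2}$ edges. So these are "capacity" constraints.

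Wait — but there's a subtlety I should flag. Conditions (i)--(iii) only enforce global capacity constraints between and within degree classes; they don't obviously enforce that each individual node actually achieves its prescribed degree. A node in $V_a$ should have exactly degree $a$. The JDM tells us the *total* number of edge-endpoints at degree-$a$ nodes is $\sum_c J_{ac} + J_{aa}$ (where $J_{aa}$ counts twice since both endpoints are in $V_a$), and condition (i) says this total equals $a|V_a|$. So the *average* degree within the class is $a$. But can we always distribute the edges so that *every* node gets exactly degree $a$? This is the crux.

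Let me reconstruct the Stanton–Pinar argument. The construction should be greedy/iterative. This is reminiscent of the Erdős–Gallai style realizability results, but here it's a refined "joint degree" version. The key structural insight (from Stanton & Pinar) is a swap/balancing lemma: if the capacity conditions hold, one can build the graph class-pair by class-pair.

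Here is how I would organize it.

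**Necessity.** For a simple graph $G$ with JDM $J$: (i) holds because summing $J_{ab}$ over $b$ plus $J_{aa}$ counts each edge-endpoint at a degree-$a$ node exactly once, and since every such node has degree exactly $a$, this total is $a|V_a|$, giving integrality. (ii) holds because edges between $V_a$ and $V_b$ form a simple bipartite graph on parts of size $|V_a|,|V_b|$, so there are at most $|V_a|\cdot|V_b|$ of them. (iii) holds because edges within $V_a$ form a simple graph on $|V_a|$ vertices, hence at most $\binom{|V_a|}{2}$ edges.

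**Sufficiency (the hard direction).** Given $J$ satisfying (i)--(iii), define $|V_a| := (\sum_c J_{ac}+J_{aa})/a$, which is a nonnegative integer by (i). Create vertex classes $V_1,\dots,V_\Delta$ with $|V_a|$ vertices each, where we *intend* every vertex of $V_a$ to end up with degree exactly $a$. The goal is to place $J_{ab}$ edges between $V_a,V_b$ (for $a<b$) and $J_{aa}$ edges inside $V_a$, as a simple graph, so that the degree of each vertex is exactly the label of its class.

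**The main obstacle** is exactly this degree-exactness requirement. The capacity conditions (ii),(iii) guarantee *enough room* to place the required number of edges, and condition (i) guarantees the *degree sum* within each class is correct; what needs proof is that these can be reconciled so that no vertex is over- or under-saturated. The natural strategy is:

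**Step 1 (balanced placement between classes).** For each unordered pair $a<b$, I want to place $J_{ab}$ edges in the bipartite graph $K_{|V_a|,|V_b|}$. The right way is to place them as evenly as possible across the $V_a$-side and the $V_b$-side, so the number of $V_b$-edges incident to any two vertices of $V_a$ differs by at most one (and symmetrically). Condition (ii), $J_{ab}\le |V_a||V_b|$, guarantees this even placement is feasible within a simple bipartite graph. Do the analogous even placement inside each $V_a$ using condition (iii).

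**Step 2 (show balanced placement yields exact degrees).** After placing edges for every pair, compute the degree of a vertex $v\in V_a$: it is $\sum_b (\text{number of }V_b\text{-neighbours of }v)$. By the balanced placement, each term is $\lfloor J_{ab}/|V_a|\rfloor$ or $\lceil J_{ab}/|V_a|\rceil$, and the sum over all $b$ equals $(\sum_b J_{ab}+J_{aa})/|V_a|\cdot(\text{correction})$... This is where the real work is, and where a naive "just balance independently" argument can fail: independent balancing per pair only makes each vertex's degree lie in a window around the average $a$, not exactly equal to $a$. The resolution, following Stanton–Pinar, is to couple the balancing choices across different $b$ via a global rebalancing (swap) argument: whenever two vertices $u,u'\in V_a$ have $\deg(u)>\deg(u')$, one finds a neighbour $w$ of $u$ that is a non-neighbour of $u'$ (possible because $\deg(u)>\deg(u')$ forces such a $w$ to exist while staying within a single class so the class structure is preserved), and performs the edge swap $uw\to u'w$. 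This decreases $\deg(u)$, increases $\deg(u')$, and leaves the JDM unchanged since $w$'s class is unaltered. Iterating this swap within each class drives all degrees in $V_a$ to the common value $a$, which is forced by the degree-sum condition (i). Establishing that such a swap always exists and that the procedure terminates (e.g. via a potential like $\sum_{v\in V_a}(\deg v - a)^2$) is the technical heart of the argument.

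I would therefore present necessity as a short verification, then state and prove the swap lemma as the key tool, and finally assemble the construction: arbitrary initial placement respecting the capacities (feasible by (ii),(iii)), followed by within-class rebalancing swaps that preserve $J$ while equalizing degrees to their target values (forced by (i)). The swap lemma — in particular verifying that an improving swap always exists whenever degrees within a class are unequal — is the step I expect to require the most care.
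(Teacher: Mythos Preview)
The paper does not prove this theorem at all; it is quoted as a known result from Stanton and Pinar \cite{stanton_2012_constructing}, and the very next sentence notes that those authors ``gave an algorithm to construct a simple graph from a given JDM.'' So there is no in-paper proof to compare against. Your proposal is a correct outline of precisely that Stanton--Pinar construction: necessity is immediate, and sufficiency proceeds by (a) placing the required $J_{ab}$ edges between labelled classes $V_a,V_b$ subject to the capacity bounds (ii),(iii), and then (b) equalizing degrees within each class via JDM-preserving edge swaps $uw\to u'w$ with $u,u'\in V_a$, using the degree-sum identity from (i) to force the common value to be $a$. Your identification of the swap lemma as the technical heart is exactly right, and your existence argument for the swap (if $\deg(u)>\deg(u')$ then $u$ has a neighbour outside $N(u')\cup\{u'\}$) is sound; you might add the one-line check that $w\neq u'$ is automatic, and that termination follows from monotone decrease of $\sum_{v\in V_a}(\deg v - a)^2$.
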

The authors of \cite{stanton_2012_constructing} gave an algorithm to construct a simple graph from a given JDM. More importantly, they also showed that the set of all graphs with the same JDM is connected through the graph operation shown in Figure \ref{fig: example transposition}: let $\mathcal{E}=\big\lbrace\lbrace u,v\rbrace,\lbrace x,y\rbrace\big\rbrace\subseteq E(G)$ be two edges such that $\deg(u)=\deg(x)$ and let $\mathcal{E}'=\big\lbrace\lbrace u,y\rbrace,\lbrace x,v\rbrace\big\rbrace$. 
The \textit{transposition} of the edges $\mathcal{E}$ in $G$ is a new graph $G'$ obtained by rewiring these two edges. Its nodes and edges are
$$
V(G')=V(G) \quad\text{~and~}\quad E(G')= \big(E(G)\backslash\mathcal{E}\big) \cup \mathcal{E}'.
$$
\vspace{-0.5cm}
\begin{figure}[h!]
    \centering
    \includegraphics[width=0.7\textwidth]{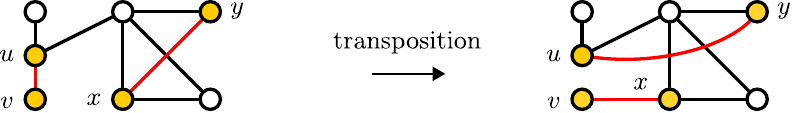}
    \caption{A transposition rewires two edges in a graph without changing the joint degrees. In this example, edges $\lbrace u,v\rbrace,\lbrace x,y\rbrace$ are rewired to $\lbrace u,y\rbrace,\lbrace x,v\rbrace$.}
    \label{fig: example transposition}
\end{figure} 

Transpositions do not change the JDM 
of a graph and the theory of matchings in convex bipartite graphs guarantees that the set of multigraphs with the same JDM is connected through transpositions \cite{diaconis_2001_statistical,stanton_2012_constructing}. 
In general, transpositions in simple graphs may create multi-edges or self-loops depending on the choice of $\mathcal{E}$, but the following result shows that there always exists a choice for which this is not the case.
\begin{theorem}[{\cite[Thm. 5.2]{stanton_2012_constructing}}]\label{th: transpositions connect simple graphs}
The set of all simple graphs with a given JDM is connected through transpositions.
\end{theorem}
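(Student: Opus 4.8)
The plan is to induct on the symmetric-difference distance $\delta(G,G')=\tfrac12\,\lvert E(G)\,\triangle\,E(G')\rvert$, proving that whenever two simple graphs $G,G'$ share a JDM and $\delta(G,G')>0$, some transposition turns $G$ into a simple graph $G''$ with the same JDM and $\delta(G'',G')<\delta(G,G')$. Iterating reaches $G'$ through simple graphs, which is the asserted connectivity. First I would normalize the setup: a shared JDM forces the same number of vertices of each degree, and since transpositions fix every vertex degree I may fix degree classes $V_1,\dots,V_\Delta$ and assume $\deg_G(w)=\deg_{G'}(w)$ for all $w$.

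For the inductive step I would exploit the structure of the symmetric difference, coloring an edge \emph{red} if it lies in $E(G)\setminus E(G')$ and \emph{blue} if it lies in $E(G')\setminus E(G)$. Because the JDMs agree, within each degree type $\{a,b\}$ the red and blue edges are equinumerous, and because $\deg_G=\deg_{G'}$ pointwise, each vertex meets equally many red and blue edges; hence the two-colored symmetric difference decomposes into color-alternating closed walks and contains an alternating cycle $w_0w_1\cdots w_{2\ell-1}w_0$ with $\{w_{2i},w_{2i+1}\}$ red and $\{w_{2i+1},w_{2i+2}\}$ blue. To shrink $\delta$ I would try to realize one blue edge, say $\{w_1,w_2\}$, by transposing the two red edges flanking it: the rewiring $\{w_0,w_1\},\{w_2,w_3\}\mapsto\{w_0,w_3\},\{w_2,w_1\}$ inserts the $G'$-edge $\{w_1,w_2\}$ and deletes two red edges, so $\delta$ drops by at least one, the only possibly new red edge being $\{w_0,w_3\}$. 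This is a legitimate transposition exactly when $\deg(w_0)=\deg(w_2)$.

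The crux is to guarantee a choice for which this rewiring is both \emph{legal} (the degree-equality $\deg(w_0)=\deg(w_2)$ demanded by a transposition, equivalently the condition that preserves the JDM) and \emph{simplicity-preserving} (neither $\{w_0,w_3\}$ already present nor any endpoints coinciding). Neither is automatic along an arbitrary cycle: the red edge incident to $w_1$ may have the wrong far-degree, and the shortcut edge $\{w_0,w_3\}$ may already exist. For legality I would use the type-count matching to pair the blue edge $\{w_1,w_2\}$ with a red edge of the same type and route the alternating walk so that the flanking red edges carry matched degrees at $w_0$ and $w_2$; for simplicity I would work along a \emph{shortest} alternating cycle, which rules out the shortcut collisions except in a few short-cycle degeneracies, and in those I would interpose one auxiliary transposition on a nearby matched pair of edges to free up the needed adjacency. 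Showing that such a legal, simplicity-preserving move (or a bounded detour of them) always exists is the technical heart of the proof, and it is precisely where the exchange/augmenting structure behind the convex-bipartite-matching argument for multigraphs --- already invoked for the multigraph case above --- does the work of certifying that the simplicity constraint never permanently blocks progress.
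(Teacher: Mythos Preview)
The paper does not supply a proof of this theorem; it is quoted as a black-box result from \cite{stanton_2012_constructing}, so there is no in-paper argument to compare your proposal against. I can only assess whether your outline stands on its own.

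Your symmetric-difference / alternating-cycle scaffold is a reasonable plan, and the reduction to $\deg_G(w)=\deg_{G'}(w)$ for all $w$ is legitimate (relabelling within degree classes is harmless since transpositions fix vertex degrees anyway). But the proposal is not a proof: you yourself flag that the ``technical heart'' --- producing a transposition that is simultaneously degree-legal and simplicity-preserving --- is left undone, and you close by gesturing at ``the exchange/augmenting structure behind the convex-bipartite-matching argument'' rather than carrying it out. Concretely, along an arbitrary alternating cycle $w_0w_1w_2w_3\cdots$ there is no reason that $\deg(w_0)=\deg(w_2)$ or $\deg(w_1)=\deg(w_3)$, so your proposed rewiring $\{w_0,w_1\},\{w_2,w_3\}\mapsto\{w_0,w_3\},\{w_1,w_2\}$ need not be a transposition in the paper's sense at all. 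Your remedies --- ``route the alternating walk so that the flanking red edges carry matched degrees'', take a shortest cycle, insert an auxiliary transposition --- are all plausible heuristics, but none of them is argued, and it is exactly here that the real case analysis lives.

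If you want to close the gap, it is cleaner to abandon the global cycle and work one degree-type at a time: pick a blue edge $\{u,v\}$ of type $(a,b)$; equality of JDMs gives a red edge $\{x,y\}$ of the same type $(a,b)$, and now $\deg(u)=\deg(x)=a$ makes the swap $\{u,v\},\{x,y\}\mapsto\{u,y\},\{x,v\}$ a genuine transposition by construction. What remains is the simplicity obstruction when $\{u,y\}$ or $\{x,v\}$ already exists, and handling that via a short detour (one preparatory transposition to free the slot) is the finite case analysis you would need to actually write out.
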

In other words, all simple graphs with the same JDM can be found by performing transpositions, in such a way that every intermediate step is also a simple graph. Since transpositions do not change the JDM, these graphs will also have the same curvature and degree frequencies. This is the first step towards our main result.
\\~\\
We note the following count for the number of multigraphs with a given JDM:
\begin{proposition}
The number of node-labeled multigraphs with a given JDM is equal to
\begin{equation}\label{eq: number of multigraphs} 
\prod_{1\leq a\leq \Delta} \frac{(a\vert V_a\vert)!}{(a!)^{\vert V_a\vert}\cdot 2^{J_{aa}}}\cdot\prod_{1\leq a\leq b\leq \Delta}\frac{1}{J_{ab}!}.
\end{equation}
\end{proposition}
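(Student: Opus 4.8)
The plan is to prove \eqref{eq: number of multigraphs} with the \emph{configuration model} (stub matching). Since the formula carries no factor for distributing the $n$ nodes among the degree classes, I read the statement as fixing the node sets $V_1,\dots,V_\Delta$ (whose sizes $|V_a|=(\sum_b J_{ab}+J_{aa})/a$ are determined by $J$) and counting the multigraphs on these labeled nodes that realize $J$. I would attach to each node $v\in V_a$ a set of $a$ distinguishable \emph{stubs} (half-edges), so class $a$ carries $a|V_a|$ stubs in total, and build a multigraph by pairing all stubs into edges. A pairing realizes $J$ exactly when, for every $a\le b$, the number of pairs joining a class-$a$ stub to a class-$b$ stub equals $J_{ab}$ (pairs internal to class $a$ counting toward $J_{aa}$). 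The strategy is then: first count these pairings, and second quotient by the relabeling of stubs to recover multigraphs.

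First I would count the pairings compatible with $J$ in three independent steps. (i) Within class $a$, partition its $a|V_a|$ stubs into the pools sent to each other class $b$ and a pool of $2J_{aa}$ stubs reserved for internal pairs; this is a multinomial choice contributing $(a|V_a|)!\big/\big(\prod_{b\neq a}J_{ab}!\,(2J_{aa})!\big)$. (ii) For each unordered pair $a<b$, match the $J_{ab}$ chosen class-$a$ stubs bijectively with the $J_{ab}$ chosen class-$b$ stubs, giving $J_{ab}!$ possibilities. (iii) For each class $a$, pair the $2J_{aa}$ reserved stubs into $J_{aa}$ unordered same-class pairs, giving $(2J_{aa}-1)!!=(2J_{aa})!\big/(2^{J_{aa}}J_{aa}!)$ possibilities. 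Taking the product, the $(2J_{aa})!$ factors cancel, and since each off-diagonal $J_{ab}!$ appears squared in the denominator of step (i) (once for class $a$, once for class $b$) it cancels against the $J_{ab}!$ of step (ii), collapsing everything to $\prod_a (a|V_a|)!\big/\big(\prod_a 2^{J_{aa}}\prod_{a\le b}J_{ab}!\big)$ pairings.

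Finally I would pass from pairings to multigraphs. The group $\prod_{v\in V}S_{\deg(v)}$, of order $\prod_a (a!)^{|V_a|}$, permutes the stubs inside each node and acts on the set of pairings with the underlying multigraphs as its orbits; dividing the pairing count by $\prod_a (a!)^{|V_a|}$ then reproduces \eqref{eq: number of multigraphs} term by term. The main obstacle, and the step that needs the most care, is that this action is \emph{not free}: flipping the two ends of a self-loop, or permuting a bundle of parallel edges between a fixed pair of nodes, leaves the multigraph unchanged. By orbit--stabilizer, a multigraph $G$ with $\ell_v$ loops at $v$ and $\mu_{ij}$ edges between $i$ and $j$ has stabilizer of size $\prod_v 2^{\ell_v}\prod_{i\le j}\mu_{ij}!$, so the quotient equals $\sum_G \big(\prod_v 2^{\ell_v}\prod_{i\le j}\mu_{ij}!\big)^{-1}$, i.e.\ the configuration-model count weighting each multigraph by the reciprocal of its symmetry factor. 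I would therefore state explicitly that \eqref{eq: number of multigraphs} records this weighted number of multigraphs; the weight is $1$ on every simple graph (indeed on every loopless multigraph without parallel edges), so the expression is an honest count there, and it is precisely the self-loop and parallel-edge symmetries that a naive orbit count would mishandle.
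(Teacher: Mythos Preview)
Your configuration-model approach is essentially the same strategy the paper uses: both count stub pairings compatible with $J$ and then quotient by the relabeling symmetries to arrive at \eqref{eq: number of multigraphs}. The paper's version is terser --- it assigns half-edges (already pre-organized into edges according to $J$) to nodes, obtaining $(a|V_a|)!/(a!)^{|V_a|}$, and then divides by $J_{ab}!$ and $2^{J_{aa}}$ to forget edge labels and flips --- but the underlying mechanism is identical to yours.

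Where your write-up differs is that you correctly flag the free-action obstacle, which the paper's proof simply glosses over. Your observation is right: the group $\prod_v S_{\deg(v)}$ does not act freely on pairings once self-loops or parallel edges appear, so dividing by its order yields the configuration-model weighted count $\sum_G \big(\prod_v 2^{\ell_v}\prod_{i\le j}\mu_{ij}!\big)^{-1}$ rather than a literal enumeration. One can check this directly: with a single degree-$2$ node and $J_{22}=1$ the formula \eqref{eq: number of multigraphs} evaluates to $\tfrac{1}{2}$, and with two degree-$2$ nodes and $J_{22}=2$ it gives $\tfrac{3}{4}$, neither an integer. So the statement as written in the paper is only literally correct on simple graphs (where all weights are $1$), and your caveat that \eqref{eq: number of multigraphs} should be read as the weighted count is the accurate formulation. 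Your proof is the more careful of the two.
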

\begin{proof}
We have $\vert V_a\vert$ nodes of degree $a$. Each such node has $a$ ``stubs" that need to be matched to $a$ of the $a\vert V_a\vert$ ``half-edges". For the first node, there are $\binom{a\vert V_a\vert}{a}$ ways to match its stubs, for the second node there are $\binom{a(\vert V_a\vert-1)}{a}$ ways, and so on. This gives a total of
\begin{align*}
&\binom{a\vert V_a\vert }{a}\binom{a(\vert V_a\vert-1)}{a}\dots\binom{a}{a} = \frac{(a\vert V_a\vert)!}{a!(a(\vert V_a\vert-1))!}\frac{(a(\vert V_a\vert-1)!}{a!(a(\vert V_a\vert-2))!}\dots \frac{a!}{a!} = \frac{(a\vert V_a\vert)!}{(a!)^{\vert V_a\vert}}
\end{align*}
ways to connect nodes of degree $a$ to half-edges. This counting procedure distinguishes the half-edges, so to obtain edge-unlabeled multigraphs we have to forget the half-edge labels. This makes any permutation of the $J_{ab}$ edges which are connected to nodes with degree $a$ and $b$ indistinguishable, so we divide the count by $J_{ab}!$. Second, forgetting the half-edge labels makes any flip of the $J_{aa}$ edges connected to two nodes of equal degree $a$ indistinguishable, so we divide the count by $2^{J_{aa}}$. This results in \eqref{eq: number of multigraphs} and completes the proof.
\end{proof}

For the JDM in Figure \ref{fig: example JDM and sequences} with degree frequencies $(2,2,2,1)$, the number \eqref{eq: number of multigraphs} equals $\frac{2!4!6!4!}{2^5(3!)^24!}=30$, so there are $30$ node-labeled multigraphs with this JDM. In practice, we may be interested in simple and non-isomorphic graphs for which this number serves as an upper bound.
\subsection{Markov bases}
In the previous section, we found that all graphs with a given JDM are connected by transpositions. 
As a second step, we now want to find all JDMs which are compatible with given curvature and degree frequencies. More precisely, these are all nonnegative integer matrices $J$ that satisfy conditions (i)---(iii) in Theorem \ref{thm: characterization of JDMs} (valid JDMs) and have prescribed row sums (degrees) and anti-diagonal sums (curvatures). We will show how this set of matrices can be obtained using Markov bases.

The theory of Markov bases was developed by Diaconis and Sturmfels in \cite{MarkovBases} to solve the problem of sampling contingency tables, i.\,e., nonnegative integer matrices with fixed row and column sums, in the context of statistics. The main result in this theory is that this sampling problem can be solved using concepts and tools from commutative algebra. We first give the main definition and theorem and then apply it to our problem.
\begin{definition}
Let $A\in\mathbb{Z}^{d\times n}$ be an integer matrix. A \emph{Markov basis for $A$} is a finite set $\mathcal{M}=\lbrace \beta_1,\dots,\beta_M\rbrace\subseteq\ker_{\mathbb{Z}}(A)$ of integer vectors in the kernel of $A$, such that for every $u,v\in\mathbb{N}^n$ with $Au=Av$, there exists a sequence $(i_1,s_1),(i_2,s_2),\dots, (i_L,s_L)\in\lbrace 1,\dots,M\rbrace\times\lbrace\pm 1\rbrace$ that satisfies
\begin{itemize}
    \item[(i)] $u+\sum_{j=1}^L s_{j}\beta_{i_j} = v$, and
    \item[(ii)] $u+\sum_{j=1}^\ell s_{j}\beta_{i_j} \geq 0\text{~for all~}1<\ell<L$.
\end{itemize}
\end{definition}
For $u\in\mathbb{N}^n$ we call $\mathcal{F}_A(u)=\lbrace v\in\mathbb{N}^n : Au=Av\rbrace$ the \emph{fiber of $u$}. The adjective `Markov' reflects that a Markov basis can be used to design a Markov chain on a fiber: points in the fiber are states of the Markov chain and the transitions out of any state $x$ are those vectors $\beta$ in the Markov basis for which $x+\beta\geq 0$. The Markov basis properties guarantee that this chain is ergodic.

The question of how to find a Markov basis for a given integer matrix is addressed by the fundamental theorem of Markov bases (see also Appendix \ref{appendix: background for markov bases}).
\begin{theorem}[Fundamental theorem of Markov bases \cite{MarkovBases}]\label{thm: fundamental theorem of Markov bases}
Let $A\in\mathbb{Z}^{d\times n}$, then $\mathcal{M}\subseteq\ker_{\mathbb{Z}}(A)$ is a Markov basis for $A$ if and only if the binomials $\lbrace \mathbf{x}^{\beta^+}-\mathbf{x}^{\beta^-}:\beta\in\mathcal{M}\rbrace$ generate the ideal $\langle \mathbf{x}^{\beta^+}-\mathbf{x}^{\beta^-}:\beta\in\ker_{\mathbb{Z}}(A)\rangle$.
\end{theorem}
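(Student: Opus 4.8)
The plan is to prove both implications through a single dictionary between lattice moves and binomials: a move $\beta$ corresponds to the binomial $\mathbf{x}^{\beta^+}-\mathbf{x}^{\beta^-}$, a lattice point $u\in\mathbb{N}^n$ to the monomial $\mathbf{x}^u$, and applying a move $\beta$ at a point $m+\beta^-$ (with $m\geq 0$, so the result stays nonnegative) to the elementary binomial $\mathbf{x}^{m}(\mathbf{x}^{\beta^+}-\mathbf{x}^{\beta^-})=\mathbf{x}^{m+\beta^+}-\mathbf{x}^{m+\beta^-}$. Write $I_A$ for the ideal on the right-hand side of the statement and $J$ for the ideal generated by the binomials of $\mathcal{M}$. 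I use throughout the basic observation that $A\beta=0$ forces $A\beta^+=A\beta^-$, so every generator of $I_A$ is a difference of two monomials whose exponents lie in the same fiber.

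For the forward direction (Markov basis $\Rightarrow$ generation), the inclusion $J\subseteq I_A$ is immediate since $\mathcal{M}\subseteq\ker_{\mathbb{Z}}(A)$. For $I_A\subseteq J$ it suffices to place each generator $\mathbf{x}^{\beta^+}-\mathbf{x}^{\beta^-}$ of $I_A$ inside $J$. Setting $u=\beta^+$ and $v=\beta^-$, the Markov property supplies a nonnegative path $u=u_0,u_1,\dots,u_L=v$ with $u_j=u_{j-1}+s_j\beta_{i_j}$. I telescope $\mathbf{x}^{u}-\mathbf{x}^{v}=\sum_{j=1}^L(\mathbf{x}^{u_{j-1}}-\mathbf{x}^{u_j})$ and note that nonnegativity of consecutive points forces $\lbrace u_{j-1},u_j\rbrace=\lbrace m_j+\beta_{i_j}^+,\,m_j+\beta_{i_j}^-\rbrace$ with $m_j=\min(u_{j-1},u_j)\geq 0$, so each summand equals $\pm\mathbf{x}^{m_j}(\mathbf{x}^{\beta_{i_j}^+}-\mathbf{x}^{\beta_{i_j}^-})$, a monomial multiple of a generator of $J$. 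Hence $\mathbf{x}^u-\mathbf{x}^v\in J$, giving $I_A\subseteq J$ and therefore $J=I_A$.

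For the converse (generation $\Rightarrow$ Markov basis), take $u,v\in\mathbb{N}^n$ with $Au=Av$, so $u-v\in\ker_{\mathbb{Z}}(A)$ and $\mathbf{x}^u-\mathbf{x}^v$ is a monomial multiple of a generator of $I_A$; in particular $\mathbf{x}^u-\mathbf{x}^v\in I_A=J$. Expanding the polynomial coefficients in a representation into monomials and splitting integer coefficients into $\pm 1$ copies, I write $\mathbf{x}^u-\mathbf{x}^v=\sum_k c_k\,\mathbf{x}^{m_k}(\mathbf{x}^{\beta_{j_k}^+}-\mathbf{x}^{\beta_{j_k}^-})$ with $c_k=\pm 1$. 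Each term is an elementary binomial $\mathbf{x}^{p_k}-\mathbf{x}^{q_k}$ recording a single allowed move between the nonnegative points $p_k,q_k$. I then read this identity in the free abelian group on monomials as $e_u-e_v=\partial c$, where $\partial$ is the boundary map of the graph $\Gamma$ whose vertices are the points of $\mathbb{N}^n$ and whose edges join move-adjacent nonnegative points. Since a $0$-chain of the form $e_u-e_v$ is a boundary only when $u$ and $v$ lie in the same connected component of $\Gamma$ (otherwise the component of $u$ would carry net mass $+1\neq 0$), the points $u$ and $v$ are joined by a sequence of allowed moves, which is exactly the Markov basis property.

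I expect the converse to be the main obstacle, because the algebraic membership $\mathbf{x}^u-\mathbf{x}^v\in J$ only constrains monomials \emph{after} cancellation, whereas the Markov property demands an explicit, sign-controlled, everywhere-nonnegative path. The key step is therefore to convert the ideal membership into the single-monomial-times-generator decomposition and then reinterpret it as a statement about $1$-chains, invoking the elementary fact that $e_u-e_v$ is a boundary precisely when $u$ and $v$ are in the same component. Condition (ii) in the definition of a Markov basis—that every intermediate point stays nonnegative—is automatic here, since each elementary binomial already has nonnegative exponents in both of its monomials.
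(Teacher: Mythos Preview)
The paper does not prove this theorem: it is stated with a citation to Diaconis--Sturmfels \cite{MarkovBases} and used as a black box, with Appendix~\ref{appendix: background for markov bases} only supplying the definitions needed to parse the statement. There is therefore no ``paper's own proof'' to compare against.

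Your argument is essentially the standard proof and is correct. The forward direction is clean; the telescoping identity together with the coordinate-wise observation that $m_j=\min(u_{j-1},u_j)\geq 0$ shows each step lies in $J$. For the converse, your reinterpretation of the ideal membership as a boundary identity in the move-graph $\Gamma$ is exactly the right idea, and the component argument (net mass $+1$ on the component of $u$ if $u$ and $v$ were disconnected) is valid over any coefficient ring, so you do not actually need the reduction to $\pm 1$ integer coefficients---the argument already works with the complex coefficients coming from an arbitrary $h_k\in\mathbb{C}[x_1,\dots,x_n]$. If you want to keep that reduction, you should add a line explaining why integer $h_k$ can be arranged (e.g., the linear system for the monomial coefficients of the $h_k$ has integer data and an integer target, hence a rational solution, and torsion-freeness of $H_0(\Gamma)$ lets you clear denominators). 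Either way, the connectivity conclusion and hence the Markov basis property follow.
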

Appendix \ref{appendix: background for markov bases} provides further background on the relevant algebraic concepts, but the details of the theorem are not relevant for the purpose of this article. The most important fact is that a finite Markov basis always exists and that it can be computed using computer algebra systems for sufficiently small matrices $A$. For larger matrices, these computations
can become infeasible in practice and other approaches such as relaxed notions of Markov bases or restrictions to particular fibers may be necessary \cite{almendra-hernandez_2024_markov}.
\begin{example}
Consider the $2\times 4$ integer matrix
$$
A=\begin{pmatrix}
1&1&1&1\\0&1&2&3
\end{pmatrix}.
$$
Using the software package \texttt{4ti2} \cite{4ti2}, we compute that this matrix has the Markov basis:
$$
\mathcal{M}=\left\lbrace\begin{psmallmatrix*}[r]0\\1\\-2\\1\end{psmallmatrix*},
\begin{psmallmatrix*}[r]1\\-2\\1\\0\end{psmallmatrix*},
\begin{psmallmatrix*}[r]1\\-1\\-1\\1\end{psmallmatrix*}\right\rbrace.
$$
By definition of Markov bases, this means that if we have two vectors $u,v\in\mathbb{N}^4$ such that $Au=Av$, then we can obtain $v$ from $u$ by adding and subtracting some sequence of the three vectors in $\mathcal{M}$. For instance, the two vectors $u=(1,2,2,5)$ and $v=(2,1,1,6)$ with $Au=Av=(10,21)$ are connected by the sequence
$$
\begin{psmallmatrix*}[r]1\\2\\2\\5
\end{psmallmatrix*} -
\begin{psmallmatrix*}[r]1\\-1\\-1\\1
\end{psmallmatrix*} =
\begin{psmallmatrix*}[r]0\\3\\3\\4
\end{psmallmatrix*}
\rightarrow
\begin{psmallmatrix*}[r]0\\3\\3\\4
\end{psmallmatrix*} +
\begin{psmallmatrix*}[r]1\\-2\\1\\0
\end{psmallmatrix*} =
\begin{psmallmatrix*}[r]1\\1\\4\\4
\end{psmallmatrix*}
\rightarrow
\begin{psmallmatrix*}[r]1\\1\\4\\4
\end{psmallmatrix*} +
\begin{psmallmatrix*}[r]1\\-1\\-1\\1
\end{psmallmatrix*} =
\begin{psmallmatrix*}[r]2\\0\\3\\5
\end{psmallmatrix*}
\rightarrow
\begin{psmallmatrix*}[r]2\\0\\3\\5
\end{psmallmatrix*} +
\begin{psmallmatrix*}[r]0\\1\\-2\\1
\end{psmallmatrix*} =
\begin{psmallmatrix*}[r]2\\1\\1\\6
\end{psmallmatrix*}.
$$

\end{example}

\subsection{Markov bases for constrained joint degree matrices}
\noindent We now show that the set of all JDMs with a given degree and curvature sequence is the fiber for some matrix $A_\Delta$. This means that the associated Markov basis $\mathcal{M}_{\Delta}$ can be used to explore this fiber. We stress that this matrix and its Markov basis only depend on the maximum degree $\Delta$, and not on the specific degree and curvature frequencies.

We start by encoding a JDM and its constraints as a vector $\gamma\in \mathbb{N}^n$, of length $n=2\binom{\Delta+1}{2}$. This vector $\gamma$ contains the upper-triangular JDM entries $(J_{ab})_{a\leq b}$ and the \emph{slack variables} $(S_{ab})_{a\leq b}$ which we define as 
$$
S_{aa} = \binom{\vert V_a\vert}{2}-J_{aa} \quad\text{~and~}\quad S_{ab} = \vert V_a\vert\cdot\vert V_b\vert - J_{ab}\quad\text{~for $1\leq a<b\leq \Delta$,}
$$
with $\vert V_a\vert$ defined as before from $J$. The slack variables are used to encode the inequalities on the JDM entries given in Theorem \ref{thm: characterization of JDMs}: a matrix $J$ is a valid JDM if and only if the associated slack variables are nonnegative integers, and thus 
$\gamma\in\mathbb{N}^n$. The matrix $J$ can be reconstructed from $\gamma$ by forgetting the slack variables, so we can write $J(\gamma)$. For the JDM of the graph in Example \ref{example: degree and curvature frequencies} and Figure \ref{fig: example JDM and sequences}, the vector $\gamma\in\mathbb{N}^{20}$ equals
\begin{align*}
\gamma &= {\footnotesize{({J_{11}},J_{12},J_{13},J_{14},J_{22},J_{23},J_{24},J_{33},J_{34},J_{44}, S_{11},S_{12},S_{13},S_{14},S_{22},S_{23},S_{24},S_{33},S_{34},S_{44})}}
\\
&= (\,0\,,\,0\,,\,1\,,\,1\,,\,1\,,\,1\,,\,1\,,\,1\,,\,2\,,\,0\,,\, 1\,,\,4\,,\,3\,,\,1\,,\,0\,,\,3\,,\,1\,,\,0\,,\,0\,,\,0\,).
\end{align*}
The first 10 entries of this vector contain the upper-triangular elements of the JDM in Figure \ref{fig: example JDM and sequences}.

Next, we want to construct a matrix $A_{\Delta}$ such that two vectors $\gamma,\gamma'\in\mathbb{N}^n$ represent a JDM with the same degree and curvature frequencies if and only if they satisfy $A_{\Delta}\gamma=A_{\Delta}\gamma'$. In other words, the rows of this matrix $A_{\Delta}$ contain the coefficients of the equations in the $J$ variables (i.\,e., entries of $\gamma$) that compute the degree and curvature frequencies, and coefficients of the equations in the $J$ and $S$ variables (i.e., entries of $\gamma$) given by the definition of the slack variables:
\begin{itemize}
    \item{\makebox[8cm]{$\sum_{b}J_{ab}+J_{aa} \,\,= \vert V_a\vert$, for all $a\in\lbrace 1,\dots,\Delta\rbrace$\hfill} (fixed degree frequencies\footnote{To be precise, this equation fixes $a\vert V_a\vert$ from which $\vert V_a\vert$ can be retrieved.}),}
    
    \item{\makebox[8cm]{$\sum_{a\leq b, a+b=\kappa} J_{ab} \,\,= \vert E_\kappa\vert$, for all $\kappa=\lbrace 2,\dots,2\Delta\rbrace$\hfill} (fixed curvature frequencies),}

    \item{\makebox[8cm]{$S_{aa}+J_{aa}\,\,=\binom{\vert V_a\vert}{2}$, for all $a\in\{1,\dots,\Delta\}$\hfill} (definition of slack variables).}

    \item{\makebox[8cm]{$S_{ab}+J_{ab}\,\,=\vert V_a\vert\cdot\vert V_b\vert$, for all $1\leq a< b\leq \Delta$\hfill} (definition of slack variables).}
\end{itemize}
These $d=\binom{\Delta+1}{2}+3\Delta-1$ equations are linear functions of $\gamma$ with integer coefficients and the coefficients of these equations determine the rows of the $d\times n$ matrix $A_{\Delta}$. To illustrate, we give $A_{\Delta}$ explicitly for $\Delta=4$ in Example \ref{example: delta=4} below and for $\Delta=5$ in Appendix \ref{appendix: delta=5}. 
\begin{example}[$\Delta=4$]\label{example: delta=4}
The matrix $A_4$ is a $21\times 20$ matrix. It can be written as the block-matrix
$$
A_{4} = \begin{pmatrix}
B&0\\
I_{10} & I_{10}
\end{pmatrix},
$$
where $I_{10}$ is the $10\times 10$ identity matrix, $0$ is the zero matrix and $B$ is the $11\times 10$ matrix whose rows encode the fixed degree frequencies (first $4$ rows) and fixed curvature frequencies (last $7$ rows) in terms of the $J$ variables:
\[
\setstacktabbedgap{.4em} 
\setstackgap{L}{.9em} 
\TABstackTextstyle{\scriptsize\color{red}}
\fixTABwidth{T}
\savestack\collabels{\tabbedCenterstack{11 & 12 & 13 & 14 & 22 & 23 & 24 & 33 & 34 & 44}}
\edef\colwidth{\maxTABwd}
\savestack\rowlabels{\tabbedCenterstack[c]{$\vert V_1\vert$ \\ $\vert V_2\vert$ \\ $\vert V_3\vert$ \\ $\vert V_4\vert$ \\ $\kappa=2$ \\ $\kappa=3$ \\ $\kappa=4$ \\ $\kappa=5$ \\ $\kappa=6$ \\ $\kappa=7$ \\ $\kappa=8$}}
\ensurestackMath{
  B = 
  \stackon{%
    \parenMatrixstack{
    \makebox[\colwidth]{$2$}       &  1 &  1 &  1 &    &    &    &    &    &    \\
      &  1 &    &    &  2 &  1 &  1 &    &    &    \\
      &    &  1 &    &    &  1 &    &  2 &  1 &    \\
      &    &    &  1 &    &    &  1 &    &  1 &  2 \\
    1 &    &    &    &    &    &    &    &    &    \\
      &  1 &    &    &    &    &    &    &    &    \\
      &    &  1 &    &  1 &    &    &    &    &    \\
      &    &    &  1 &    &  1 &    &    &    &    \\
      &    &    &    &    &    &  1 &  1 &    &    \\
      &    &    &    &    &    &    &    &  1 &    \\
      &    &    &    &    &    &    &    &    &  1
    }%
  }{\collabels}
  \rowlabels
}
\]
\noindent The top indices show the correspondence between the columns of $A_{4}$ and the entries of the $J$ matrix, while the indices on the right indicate the equation encoded by the row. 
\end{example}

The relevance of the matrix $A_\Delta$ for our problem is shown by the following proposition.
\begin{proposition}\label{prop: JDMs are a fiber}
Let $J$ be a JDM with corresponding vector $\gamma$. The fiber $\mathcal{F}_{A_\Delta}(\gamma)$ contains all JDMs with the same degree and curvature frequencies as $J$. 
\end{proposition}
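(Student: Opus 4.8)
The plan is to prove the containment directly from the definition of the fiber: a JDM $J'$ with encoding $\gamma'$ lies in $\mathcal{F}_{A_\Delta}(\gamma)$ precisely when $\gamma'\in\mathbb{N}^n$ and $A_\Delta\gamma' = A_\Delta\gamma$. So I would fix an arbitrary JDM $J'$ sharing the degree and curvature sequences of $J$, form its encoding $\gamma'$, and verify these two conditions in turn. The verification of $A_\Delta\gamma'=A_\Delta\gamma$ splits cleanly along the three blocks of rows that define $A_\Delta$ (degree equations, curvature equations, slack-variable equations), so the bulk of the argument is a block-by-block bookkeeping check.

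First I would check nonnegativity, i.e. $\gamma'\in\mathbb{N}^n$. The JDM entries $J'_{ab}$ are counts of edges and hence nonnegative integers. For the slack coordinates, the point is that $J'$ is a \emph{valid} JDM, so Theorem \ref{thm: characterization of JDMs} applies: conditions (ii) and (iii) read $J'_{ab}\le\vert V'_a\vert\cdot\vert V'_b\vert$ and $J'_{aa}\le\binom{\vert V'_a\vert}{2}$, which are exactly the statements $S'_{ab}\ge 0$, while condition (i) guarantees that the $\vert V'_a\vert$ are integers. Hence all slack entries are nonnegative integers and $\gamma'\in\mathbb{N}^n$.

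Next I would verify $A_\Delta\gamma'=A_\Delta\gamma$ one block at a time. For the degree rows, the linear form evaluates to $\sum_c J'_{ac}+J'_{aa}=a\vert V'_a\vert$; since $J'$ has the same degree sequence as $J$ we have $\vert V'_a\vert=\vert V_a\vert$, so these entries agree. For the curvature rows, the linear form is the anti-diagonal sum $\sum_{a\le b,\,a+b=\kappa}J'_{ab}$, which counts the edges of Forman--Ricci curvature $4-\kappa$, so equality of the curvature sequences gives equality here. For the slack rows, the form evaluates to $S'_{ab}+J'_{ab}$, which by the definition of the slack variables equals $\vert V'_a\vert\cdot\vert V'_b\vert$ (resp. $\binom{\vert V'_a\vert}{2}$ on the diagonal); as this depends only on the node counts $\vert V'_a\vert=\vert V_a\vert$, it matches the corresponding entry of $A_\Delta\gamma$. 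Together these three blocks give $A_\Delta\gamma'=A_\Delta\gamma$, and combined with nonnegativity this places $\gamma'$ in $\mathcal{F}_{A_\Delta}(\gamma)$.

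The only genuinely delicate point is the slack block, and it is conceptual rather than computational: the constants $\vert V_a\vert\cdot\vert V_b\vert$ are nonlinear in the degree-sequence data, so it is not a priori obvious that a fixed linear form of $\gamma$ can reproduce them consistently across different JDMs. The resolution is precisely the role of the slack variables — the relation $S_{ab}+J_{ab}=\vert V_a\vert\cdot\vert V_b\vert$ is \emph{linear} in the coordinates of $\gamma$, so these nonlinear quantities appear only on the fixed right-hand side $A_\Delta\gamma$, and they coincide for $J$ and $J'$ solely because the two graphs share a degree sequence. I would close by remarking that the same block-by-block computation run in reverse shows that every nonnegative integer vector in $\mathcal{F}_{A_\Delta}(\gamma)$ is itself the encoding of a valid JDM with the prescribed degree and curvature sequences; this converse, though not needed for the stated claim, is what makes the Markov basis $\mathcal{M}_\Delta$ an exact exploration tool for this set.
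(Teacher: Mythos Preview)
Your proof is correct and uses essentially the same block-by-block verification as the paper: nonnegativity of $\gamma'$ from Theorem~\ref{thm: characterization of JDMs}, then equality of $A_\Delta\gamma'$ and $A_\Delta\gamma$ along the degree, curvature, and slack rows in turn. The only presentational difference is that the paper spells out the \emph{converse} direction in detail---starting from an arbitrary $\gamma'\in\mathcal{F}_{A_\Delta}(\gamma)$ and deducing that $J(\gamma')$ is a valid JDM with the required sequences---and then dispatches the containment stated in the proposition with a one-line ``the same approach shows\ldots''; you do the reverse, proving the stated containment directly and relegating the converse to a closing remark.
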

\begin{proof}
Let $J$ be a JDM with corresponding vector $\gamma$. We consider the fiber
$$
\mathcal{F}_{A_\Delta}(\gamma) = \lbrace \gamma'\in\mathbb{N}^n \,:\, A_\Delta \gamma' = A_\Delta \gamma\rbrace.
$$
Let $\gamma'\in\mathcal{F}_{A_\Delta}$ and write $J'_{ij}$ and $S'_{ij}$ for the corresponding entries in this vector. From the rows of $A_{\Delta}$ that encode the fixed degree frequencies, we obtain
\begin{equation}\label{eq: degree sequences equal}
a\vert V'_a\vert:= \sum_{b} J'_{ab} + J'_{aa} = \sum_{b} J_{ab} + J_{aa} = a\vert V_a\vert \text{~for all $a\in\lbrace 1,\dots,\Delta\rbrace$}.
\end{equation}
From the rows of $A_{\Delta}$ corresponding to the off-diagonal slack equations ($1\leq a<b\leq\Delta)$ and the fact that $\gamma'\in\mathbb{N}^n\geq 0$ and thus $S'_{ab}\geq 0$, we know that
$$
S'_{ab} + J'_{ab} = S_{ab} + J_{ab} = \vert V_a\vert\cdot\vert V_b\vert = \vert V'_a\vert\cdot\vert V'_b\vert \Longrightarrow J'_{ab}\leq \vert V'_a\vert\cdot\vert V'_b\vert.
$$
Similarly, from the diagonal slack equations ($1\leq a\leq \Delta$) and $S'_{aa}\geq 0$, we obtain
$$
S'_{aa} + J'_{aa} = S_{aa}+J_{aa} = \binom{\vert V_{a}\vert}{2} = \binom{\vert V'_{a}\vert}{2}\Longrightarrow J'_{aa}\leq \binom{\vert V'_a\vert}{2}.
$$
By Theorem \ref{thm: characterization of JDMs}, this proves that the matrix $J'=J(\gamma')$ is a JDM. The degree frequencies $(\vert V'_a\vert)_{1\leq a\leq \Delta}$ and $(\vert V_a\vert)_{1\leq a\leq \Delta}$ are equal by \eqref{eq: degree sequences equal}, and the rows of $A_{\Delta}$ which encode the fixed curvature frequencies guarantee that
$$
\sum_{\substack{{a+b}=\kappa\\a\leq b}} J'_{ab} = \sum_{\substack{{a+b}=\kappa\\a\leq b}} J_{ab}
$$
and thus that the curvature frequencies of $J$ and $J'$ are equal. This holds for every $\gamma'\in\mathcal{F}_{A_\Delta}(\gamma)$ which proves that every vector in the fiber indeed corresponds to a JDM with degree and curvature frequencies equal to $J$. The same approach shows that every such JDM is also an element of the fiber, which completes the proof.
\end{proof}

Following Proposition \ref{prop: JDMs are a fiber}, we can thus use a Markov basis $\mathcal{M}_{\Delta}$ corresponding to the matrix $A_{\Delta}$ to obtain all JDMs with fixed curvature and degree frequencies. At this point, it has become clear why we need to consider fixed degrees: to encode the inequalities in the characterization of JDM matrices using slack variables as $S_{ab}+J_{ab}$, we need $\binom{\vert V_a\vert}{2}$ and $\vert V_a\vert\cdot\vert V_b\vert$ to be constant.

A first result on the Markov basis $\mathcal{M}_{\Delta}$ follows from the specific structure of $A_{\Delta}$.
\begin{proposition}\label{prop: slack is negative of J}
The entries of a vector $\beta\in\mathcal{M}_{\Delta}$ corresponding to the slack variables are the negative of the entries corresponding to the $J$ variables.
\end{proposition}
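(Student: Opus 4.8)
The plan is to read the claim straight off the defining property of a Markov basis, namely that $\mathcal{M}_\Delta \subseteq \ker_{\mathbb{Z}}(A_\Delta)$. So I would start by fixing an arbitrary $\beta \in \mathcal{M}_\Delta$ and splitting it into the block of coordinates indexed by the $J$-variables and the block indexed by the slack variables, writing $\beta^J_{ab}$ and $\beta^S_{ab}$ for the entries of $\beta$ in the positions of $J_{ab}$ and $S_{ab}$ respectively (for $1 \leq a \leq b \leq \Delta$). Since $\beta \in \ker_{\mathbb{Z}}(A_\Delta)$, it satisfies $A_\Delta \beta = 0$, and the goal reduces to showing that the slack block is the negation of the $J$ block, i.e. $\beta^S_{ab} = -\beta^J_{ab}$ for every such pair $(a,b)$.

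The key observation is to isolate the correct rows of $A_\Delta$. By construction, the matrix $A_\Delta$ has three groups of rows, and the third group --- the $\binom{\Delta+1}{2}$ rows coming from the definition of the slack variables --- are precisely the linear functionals $\gamma \mapsto S_{ab} + J_{ab}$. Each such row therefore has coefficient $1$ in exactly the two columns indexed by $S_{ab}$ and $J_{ab}$ and coefficient $0$ in every other column; I would make this explicit by quoting the third bullet in the construction of $A_\Delta$. Applying the row associated with the pair $(a,b)$ to the kernel vector $\beta$ then yields, from $A_\Delta \beta = 0$, the scalar equation $\beta^S_{ab} + \beta^J_{ab} = 0$, hence $\beta^S_{ab} = -\beta^J_{ab}$. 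As this argument runs over all $(a,b)$ with $a \leq b$, it establishes the proposition.

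Honestly, there is no genuine obstacle here: the entire content lies in correctly identifying that the slack-definition rows of $A_\Delta$ act on $\beta$ simply as the coordinatewise sum $\beta^S_{ab} + \beta^J_{ab}$, after which the kernel condition does all the work. The only point requiring a little care is bookkeeping of the block structure of $\gamma \in \mathbb{N}^n$ with $n = 2\binom{\Delta+1}{2}$, so that the $J$- and $S$-indexed coordinates are matched up correctly; this is purely notational. I would also add a closing remark that, as a consequence, a Markov basis vector is determined by its $J$-block alone, which is the structural fact that makes the subsequent reduction to rewiring moves on the JDM entries possible.
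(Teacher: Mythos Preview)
Your proposal is correct and follows essentially the same approach as the paper: both arguments observe that the slack-definition rows of $A_\Delta$ encode the linear functionals $\gamma\mapsto J_{ab}+S_{ab}$, so any $\beta\in\ker_{\mathbb{Z}}(A_\Delta)$ must satisfy $\beta^J_{ab}+\beta^S_{ab}=0$. The paper's version is simply terser, stating that the invariance of $J_{ab}+S_{ab}$ forces any change in $J_{ab}$ to be compensated by the opposite change in $S_{ab}$.
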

\begin{proof}
By construction, the equation $J_{ab}+S_{ab}$ is invariant so any change $J_{ab}+x$ when applying $\beta$ must be compensated by the opposite change $S_{ab}-x$.
\end{proof}

Following Proposition \ref{prop: slack is negative of J}, we can just consider the $J$-entries of vectors in the Markov basis and we will furthermore rearrange these as a symmetric $\Delta\times\Delta$ matrix and write $\delta J\in\mathcal{M}_\Delta$. We furthermore define\footnote{This notation refers to the fact that $\deg(\delta J)$ is the degree of the homogeneous binomial $\mathbf{x}^{\delta J^+}-\mathbf{x}^{\delta J^-}$ (see Appendix \ref{appendix: background for markov bases}). This should not be confused with the degree of a node.} $\deg(\delta J):=\frac{1}{2}\sum_{a\leq b}\vert J_{ab}\vert$. We will see 
in the next section that this corresponds to the number of edges involved in the graph operation associated with $\delta J$.

\begin{example}[$\Delta=4$, continued]\label{example: delta=4 continued}
We continue with the matrix $A_4$ from Example \ref{example: delta=4}. The matrix $A_{4}$ has rank $19$ (see also Proposition \ref{prop: rank of Adelta}) and a one-dimensional kernel. The corresponding Markov basis $\mathcal{M}_4$ thus consists of a single vector
$$
\mathcal{M}_4=(0,0,1,-1,-1,1,1,-1,0,0)^T,
$$
where we only show the entries corresponding to the $J$ variables, following Proposition \ref{prop: slack is negative of J}. We can also represent this vector as the symmetric matrix
$$
\delta J = \begin{pmatrix*}[r]
0&0&1&-1\\
0&-1&1&1\\
1&1&-1&0\\
-1&1&0&0
\end{pmatrix*}\text{with $\deg(\delta J)=3.$}
$$    
\end{example}

We now give some further results related to the matrix $A_{\Delta}$.
\begin{proposition}\label{prop: rank of Adelta}
$\rnk(A_{\Delta})= \binom{\Delta+1}{2}+3\Delta-3 =d-2$ for all $\Delta\geq 3$.
\end{proposition}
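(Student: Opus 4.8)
The plan is to exploit the block structure $A_\Delta=\begin{pmatrix}B&0\\ I&I\end{pmatrix}$, where $I$ denotes the $m\times m$ identity with $m=\binom{\Delta+1}{2}$ (the slack rows) and $B$ is the $(3\Delta-1)\times m$ matrix of degree and curvature equations in the $J$-variables. Column operations (subtract the $J$-column block from the slack-column block) followed by row operations (use the bottom identity block to clear $B$ in the top-left corner) bring $A_\Delta$ into the block-antidiagonal form $\begin{pmatrix}0&-B\\ I&0\end{pmatrix}$, whose rank is manifestly $m+\rnk(B)$. Since $m=\binom{\Delta+1}{2}$ and the target value is $\binom{\Delta+1}{2}+3\Delta-3$, the whole statement reduces to the single claim $\rnk(B)=3\Delta-3$.

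To compute $\rnk(B)$ I would instead compute the dimension of its left kernel, i.e. the space of linear dependencies among the $3\Delta-1$ rows, which are the $\Delta$ degree rows $D_a$ and the $2\Delta-1$ curvature rows $K_\kappa$. Writing a candidate dependency as $\sum_a \alpha_a D_a+\sum_\kappa\beta_\kappa K_\kappa=0$ and reading off the coefficient of each upper-triangular variable $J_{pq}$ gives exactly two families of scalar equations: from an off-diagonal entry ($p<q$) one gets $\alpha_p+\alpha_q+\beta_{p+q}=0$, and from a diagonal entry one gets $2\alpha_p+\beta_{2p}=0$. These follow from the fact that $J_{pq}$ occurs in $D_p$, $D_q$ and $K_{p+q}$ with coefficient $1$, while $J_{pp}$ occurs in $D_p$ with coefficient $2$ and in $K_{2p}$ with coefficient $1$.

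The key step is to solve this system. Each $\beta_\kappa$ is forced to equal $-(\alpha_p+\alpha_q)$ for every representation $\kappa=p+q$ with $p\le q$; since every $\kappa\in\{2,\dots,2\Delta\}$ has at least one such representation, the $\beta$'s are completely determined by the $\alpha$'s, and the only remaining constraint is that $\alpha_p+\alpha_q$ depend solely on the sum $p+q$. Comparing the representations $2p=(p-1)+(p+1)$ and $2p=p+p$ for $2\le p\le\Delta-1$ yields $\alpha_{p+1}-\alpha_p=\alpha_p-\alpha_{p-1}$, so the consecutive differences are all equal and $(\alpha_a)$ must be an arithmetic sequence $\alpha_a=c+da$; conversely any such sequence gives $\alpha_p+\alpha_q=2c+d(p+q)$ and hence a genuine dependency. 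The solution space is therefore the two-dimensional family parametrized by $(c,d)$, so the left kernel of $B$ has dimension $2$ and $\rnk(B)=(3\Delta-1)-2=3\Delta-3$, as needed.

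The main obstacle, and really the only non-mechanical point, is the characterization of the dependency space as \emph{exactly} two-dimensional. I expect the arithmetic-sequence argument to be the crux: one must check both that the constraints force $(\alpha_a)$ to be arithmetic (this is where the hypothesis $\Delta\ge3$ enters, since for $\Delta\le2$ there are too few sums $p+q$ to pin down the differences) and that no additional freedom hides in the $\beta$'s. It is reassuring, and worth recording, that the two independent dependencies have transparent combinatorial meaning: $c=1,\,d=0$ gives $\beta_\kappa\equiv-2$ and recovers the handshake identity $\sum_a a\lvert V_a\rvert=2\lvert E\rvert$, while $c=0,\,d=1$ gives $\beta_\kappa=-\kappa$ and recovers the squared-degree identity $\sum_a a^2\lvert V_a\rvert=\sum_{\{u,v\}\in E}(\deg u+\deg v)$.
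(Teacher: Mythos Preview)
Your proof is correct and takes a genuinely different route from the paper's. Both arguments reduce to showing $\rnk(B)=3\Delta-3$, but the paper treats the two inequalities separately: it exhibits the two explicit row dependencies (the handshake and squared-degree identities you recover at the end) to get $\rnk(B)\le 3\Delta-3$, and then constructs an explicit nonsingular $(3\Delta-3)$-minor of $B$ out of carefully chosen $3\times 3$ upper-triangular blocks to get the reverse inequality. Your approach instead computes the left kernel of $B$ in one stroke by reducing the dependency equations to the condition that $\alpha_p+\alpha_q$ depend only on $p+q$, and then characterizing all such sequences as arithmetic progressions. This is cleaner and more conceptual: it explains \emph{why} the kernel is exactly two-dimensional rather than just verifying it, and it makes the role of the hypothesis $\Delta\ge 3$ transparent (you need at least one interior index $2\le p\le\Delta-1$ to compare the two representations of $2p$). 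The paper's approach, on the other hand, has the advantage of producing an explicit invertible submatrix, which could be useful if one later wanted concrete coordinates on the fiber. Your block reduction $\rnk(A_\Delta)=m+\rnk(B)$ via elementary operations is also tidier than the paper's direct minor argument for the same step.
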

The proof of Proposition \ref{prop: rank of Adelta} is given in Appendix \ref{appendix: proof of rank Adelta}. Understanding the size of the fiber $\vert\mathcal{F}_{A_\Delta}(\gamma)\vert$ is a hard problem in general. However, making use of the special property that $\ker(A_4)$ is one-dimensional, we obtain the following bound on the size of the fiber in this case. 
\begin{proposition}
Let $J$ be a JDM with $\Delta=4$. Then there are at most
$$
\vert\mathcal{F}_{A_4}(\gamma(J))\vert \leq 
1 + \min\lbrace J_{22},J_{14},J_{33}\rbrace + \min\lbrace J_{13},J_{23},J_{24} \rbrace
$$
JDMs, including $J$, with the same degree and curvature frequencies as $J$.
\end{proposition}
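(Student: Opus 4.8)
The plan is to exploit the fact, established in Example \ref{example: delta=4}, that $\ker(A_4)$ is one-dimensional and spanned by the single Markov basis element $\delta J\in\mathcal{M}_4$. By Proposition \ref{prop: JDMs are a fiber} the fiber $\mathcal{F}_{A_4}(\gamma(J))$ is exactly the set of JDMs sharing the degree and curvature sequences of $J$, and by Proposition \ref{prop: slack is negative of J} I only need to track the $J$-entries of the basis vector. Since the kernel is one-dimensional, every element of the fiber has the form $J+k\,\delta J$ for some integer $k\in\mathbb{Z}$, and conversely every such matrix lying in $\mathbb{N}^n$ (i.e. with nonnegative $J$- and slack-entries) belongs to the fiber. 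Counting the fiber therefore reduces to counting the admissible integers $k$.

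First I would read the six nonzero upper-triangular entries of $\delta J$ directly off Example \ref{example: delta=4}: namely $\delta J_{13}=\delta J_{23}=\delta J_{24}=+1$ and $\delta J_{14}=\delta J_{22}=\delta J_{33}=-1$. Requiring that the perturbed matrix $J+k\,\delta J$ have nonnegative entries then yields six linear inequalities in $k$: the three entries where $\delta J$ is $+1$ give lower bounds $k\geq -J_{13}$, $k\geq -J_{23}$, $k\geq -J_{24}$, while the three entries where $\delta J$ is $-1$ give upper bounds $k\leq J_{14}$, $k\leq J_{22}$, $k\leq J_{33}$. Combining these, the admissible $k$ satisfy
\[
-\min\{J_{13},J_{23},J_{24}\}\ \leq\ k\ \leq\ \min\{J_{14},J_{22},J_{33}\}.
\]
The number of integers in this interval is $1+\min\{J_{14},J_{22},J_{33}\}+\min\{J_{13},J_{23},J_{24}\}$, which is precisely the claimed bound; note that $k=0$, corresponding to $J$ itself, always lies in the range since every entry of $J$ is nonnegative. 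As $\delta J\neq 0$, distinct values of $k$ produce distinct JDMs, so this count bounds $\vert\mathcal{F}_{A_4}(\gamma(J))\vert$ from above.

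The one step I would flag as the real content, and the reason the statement is an inequality rather than an equality, is that I have used only the nonnegativity of the $J$-entries. Membership in the fiber also requires the slack variables $S-k\,\delta J$ to remain nonnegative, equivalently that conditions (ii)--(iii) of Theorem \ref{thm: characterization of JDMs} continue to hold for $J+k\,\delta J$; these produce the additional bounds $k\leq\min\{S_{13},S_{23},S_{24}\}$ and $k\geq -\min\{S_{14},S_{22},S_{33}\}$. Since these extra constraints can only further shrink the interval of admissible $k$, they do not enlarge the count, and the upper bound stands. Converting the bound into an exact formula would require carrying these slack inequalities through as well, which is why I expect the nonnegativity bookkeeping — keeping the two families of constraints straight and checking that the slack family is genuinely redundant for an \emph{upper} bound — to be the only delicate part of the argument.
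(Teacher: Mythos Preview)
Your proposal is correct and follows essentially the same argument as the paper: use the one-dimensional kernel to write every fiber element as $J+k\,\delta J$, impose nonnegativity of the $J$-entries to bound $k$, and observe that the slack constraints can only shrink the admissible interval, yielding an upper bound rather than an equality. If anything, your version is slightly more careful in writing the lower bound as $k\geq -\min\{J_{13},J_{23},J_{24}\}$ and in spelling out the role of the slack inequalities.
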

\begin{proof}
We recall from Example \ref{example: delta=4} above that the Markov basis for $A_{4}$ contains a single element
$$
\delta J = \begin{psmallmatrix*}[r]
0&0&1&-1\\
0&-1&1&1\\
1&1&-1&0\\
-1&1&0&0
\end{psmallmatrix*}.
$$
This means that every JDM with the same degree and curvature frequencies as $J$ is of the form $J_k = J+k\times \delta J$ with $k\in\mathbb{Z}$. For $J_k$ to be a JDM, it is necessary that $J_k\geq 0$ and for this reason we know that $k\leq \min\lbrace J_{22},J_{14},J_{33}\rbrace$ and $k\geq \min\lbrace J_{13},J_{23},J_{24}\rbrace$. These conditions are necessary but in general not sufficient, since $J_k\geq 0$ alone does not guarantee that the slack variables are also nonnegative. The possible JDMs from the proposition are thus a subset of $J_{k}$ for $k$ within the given bounds, which completes the proof.
\end{proof}

We note that the bound is the same for all elements of the fiber. To illustrate that the bound can be strict, consider the JDM in the running example in Figure \ref{fig: example JDM and sequences}. Here, $J+\delta J$ is a valid JDM but $J-\delta J$ is not and we thus have just two JDMs in the fiber, whereas $1+\min\lbrace J_{22},J_{14},J_{33}\rbrace+\min\lbrace J_{13},J_{23},J_{24}\rbrace=3$.
\subsection{Markov moves and main result}
To combine Theorem \ref{th: transpositions connect simple graphs} on transpositions with Proposition \ref{prop: JDMs are a fiber} on JDMs and Markov bases, we make the observation that applying an element of the Markov basis $J\rightarrow J+\delta J$ to a JDM can be interpreted as a graph operation; this is illustrated in Figure \ref{fig: markov move}.
\begin{figure}[h!]
    \centering
    \includegraphics[width=0.8\textwidth]{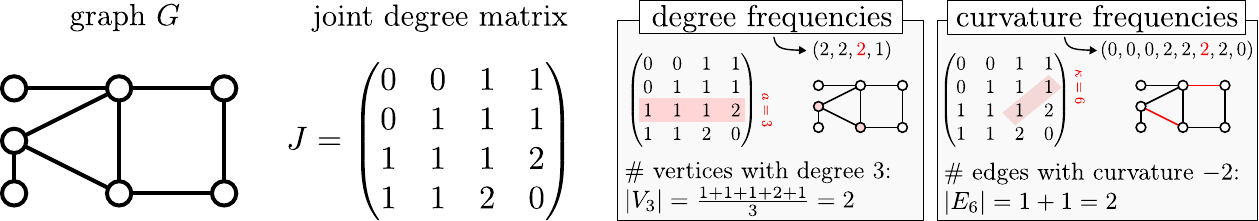}
    \caption{The figure left illustrates how a Markov basis element $\delta J\in\mathcal{M}_\Delta$ can be interpreted as a graph operation that rewires certain edges; we call this a Markov move. The figure right shows a Markov move on the running example.}
    \label{fig: markov move}
\end{figure}

We define a \emph{Markov move} as follows: Let $G$ be a graph and $\delta J$ an element of the Markov basis such that $J(G)+\delta J$ is a JDM. Consider a pair $(\mathcal{E},\pi)$ where $\mathcal{E}=(\lbrace i_\ell,j_\ell\rbrace)_{\ell=1}^{\deg(\delta J)}\subseteq E(G)$ is a subset of edges that has $\vert\delta J_{ab}\vert$ edges with degrees $a,b$ whenever $\delta J_{ab}<0$, and where $\pi$ is a permutation such that the set $\mathcal{E}'=(\lbrace i_\ell,\pi(j_{\ell})\rbrace)_{\ell=1}^{\deg(\delta J)}$ has $\vert \delta J_{ab}\vert$ edges with degrees $a,b$ whenever $\delta J_{ab}>0$. Existence of such a pair follows from the definition of the Markov basis $\delta J$. A Markov move of $G$ is a new graph $G'$ with nodes and edges
$$
V(G')=V(G)\quad\text{~and~}\quad E(G') = (E(G)\backslash \mathcal{E})\cup\mathcal{E}'.
$$
See Figure \ref{fig: markov move} for an example. We now know that all multigraphs with given degree and curvature frequencies are connected by a finite set of graph operations: transpositions, which leave the JDM unchanged, and Markov moves corresponding to the elements of $\mathcal{M}_\Delta$, which change the JDM. This is summarized in the following theorem.
\begin{theorem}\label{th: markov moves and transpositions} The set of multigraphs with given Forman--Ricci curvature and degree frequencies is connected through transpositions and Markov bases. For small $\Delta$, there exist Markov bases of size
$$
\begin{array}{c||c|c|c|c|c|c|c|c|c}
 \Delta & 
        1 & 2 & 3 & 4 & 5 & 6 & 7 & 8 & \geq 9
        \\
        \hline
        \text{$\vert \mathcal{M}_{A_{\Delta}}\vert$} & 
        0 & 0 & 0 & 1 & 9 & 111 & 2662 & 171964 & ?
\end{array}
$$

\end{theorem}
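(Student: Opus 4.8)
The plan is to assemble the main statement from the two building blocks developed above. The key conceptual claim is that the set of all multigraphs with a fixed Forman--Ricci curvature and degree sequence decomposes as a union of ``JDM-classes'' (graphs sharing a joint degree matrix), and that these two layers can each be traversed by a finite set of moves. First I would argue the intra-JDM connectivity: by Theorem \ref{th: transpositions connect simple graphs}, any two graphs with the \emph{same} JDM are connected through transpositions, and since transpositions preserve the JDM they preserve both the degree and curvature sequences. This handles movement \emph{within} a fixed JDM.

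Second, I would argue the inter-JDM connectivity using the Markov basis. By Proposition \ref{prop: JDMs are a fiber}, the set of all JDMs with a given degree and curvature sequence is exactly the fiber $\mathcal{F}_{A_\Delta}(\gamma)$. By the defining property of a Markov basis $\mathcal{M}_\Delta$ for $A_\Delta$, any two points $\gamma,\gamma'$ in this fiber are connected by a sequence of steps $\gamma + \sum_{j} s_j\,\beta_{i_j}$, each of which remains nonnegative (and hence, by Proposition \ref{prop: JDMs are a fiber} applied to each intermediate vector, corresponds to a genuine JDM with the same degree and curvature sequences). The crucial translation step is to interpret each such step $J \to J + \delta J$ as the graph-level \emph{Markov move} defined above: the negative entries of $\delta J$ identify which edges to remove and the positive entries identify which edges to insert, and existence of a valid pair $(\mathcal{E},\pi)$ realizing this rewiring on an actual graph follows because $\delta J\in\ker(A_\Delta)$ guarantees the removed and inserted multisets of (degree, degree) edge-types match in count. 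I would then chain the two layers: given graphs $G,G'$ with equal degree and curvature sequences, map to their JDMs $J(G),J(G')$, use the Markov basis to step $J(G)\to J(G')$ through a sequence of Markov moves, and use transpositions to adjust within each JDM-class so that the endpoints coincide; this shows connectivity through transpositions and Markov moves.

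The numerical table of $|\mathcal{M}_{A_\Delta}|$ requires a separate, computational justification rather than a proof in the analytic sense. For each $\Delta\in\{1,\dots,8\}$ I would construct the matrix $A_\Delta$ explicitly, exactly as done in Examples \ref{example: delta=4} and \ref{example: delta=5}, and compute a minimal Markov basis using a computer algebra system such as \texttt{4ti2} \cite{4ti2}, reporting its cardinality. The cases $\Delta\in\{1,2,3\}$ give the empty Markov basis because $\ker(A_\Delta)=\{0\}$: here the degree and curvature constraints already pin down the JDM uniquely, which can be checked directly against Proposition \ref{prop: rank of Adelta} once one verifies $n = \operatorname{rank}(A_\Delta)$ in these small cases. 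The entry ``$?$'' for $\Delta\geq 9$ records that the Markov basis computation becomes infeasible at that size, consistent with the remark on computational limitations made after Theorem \ref{thm: fundamental theorem of Markov bases}.

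I expect the main obstacle to be the translation step between the algebraic Markov basis and the combinatorial graph operation, specifically verifying that a move $J\to J+\delta J$ which is valid at the level of matrices (both endpoints are JDMs) can always be realized by an \emph{actual} edge rewiring on some representative graph. The subtlety is that $\delta J$ prescribes how many edges of each (degree, degree) type to delete and create, but one must ensure a concrete graph exists in which the required edges to delete are present and the permutation $\pi$ produces the required new edges without ambiguity; this is where the combination with Theorem \ref{th: transpositions connect simple graphs} is essential, since transpositions let us first maneuver within a JDM-class to a representative on which the desired Markov move is performable. A secondary subtlety, which I would flag explicitly, is that the theorem is stated for \emph{multigraphs} precisely because the unconstrained-sign connectivity through transpositions (guaranteed by convex bipartite matchings, as cited) holds cleanly for multigraphs, whereas restricting each intermediate step to be a simple graph requires the stronger Theorem \ref{th: transpositions connect simple graphs}; keeping the statement at the multigraph level avoids having to prove that Markov moves themselves preserve simplicity.
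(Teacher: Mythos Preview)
Your proposal is correct and follows essentially the same two-layer argument as the paper: transpositions connect multigraphs within a fixed JDM, Markov moves (coming from the Markov basis $\mathcal{M}_\Delta$ of $A_\Delta$ via Proposition~\ref{prop: JDMs are a fiber}) connect the JDMs, and the table is obtained by direct computation with \texttt{4ti2} together with the observation that $A_\Delta$ has trivial integer kernel for $\Delta\le 3$. Your discussion of the realization subtlety for Markov moves and the reason for working with multigraphs is more explicit than the paper's, but the underlying strategy is identical.
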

\begin{proof}
Let $G$ and $G'$ be two multigraphs with the same curvature and degree frequencies If $G$ and $G'$ have the same JDM then Theorem \ref{th: markov moves and transpositions} in the case of multigraphs (see also \cite{stanton_2012_constructing, diaconis_2001_statistical}) says that $G$ can be transformed into $G'$ by transpositions. If $G$ and $G'$ do not have the same JDM, i.\,e., $J\neq J'$, then by Proposition \ref{prop: JDMs are a fiber} and the fundamental theorem of Markov bases, matrix $J$ can be transformed into $J'$ by adding or subtracting a sequence of elements from the Markov basis $\mathcal{M}_{\Delta}$. This implies that $G$ can be transformed using Markov moves into a graph $G''$ such that $J''=J$. Using Theorem \ref{th: markov moves and transpositions} again, $G''$ can then be transformed into $G'$ using transpositions. All intermediate steps in this procedure can be guaranteed to be multigraphs. This concludes the proof of the first part. The results on the size of the Markov bases follows from computations: for $\Delta\in\lbrace 1,2,3\rbrace$ the matrix $A_{\Delta}$ has full rank (see Proposition \ref{prop: rank of Adelta}) and for $\Delta\in\lbrace 4,5,6,7,8\rbrace$ we computed the Markov bases using the software \texttt{4ti2}.
\end{proof}

To illustrate, Figure \ref{fig: markov move and transpositions} shows an example of a sequence of graphs, connected through transpositions and Markov moves; these graphs have the same degree and curvature frequencies. 
\begin{figure}[h!]
    \centering    \includegraphics[width=0.95\textwidth]{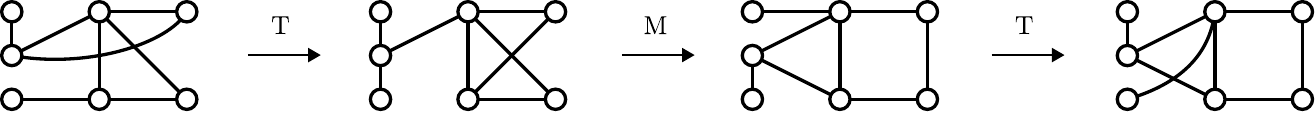}    \caption{A sequence of graphs connected through transpositions (T) and Markov moves (M). Each graph in this sequence has the same degree and curvature frequencies as the graph in Example \ref{example: degree and curvature frequencies}, but not necessarily the same joint degree matrix.}
    \label{fig: markov move and transpositions}
\end{figure}

We note that the set of simple graphs with given curvature and degree frequencies sits inside this bigger set of multigraphs. So in particular, the set of simple graphs is connected by transpositions and Markov moves, with intermediate steps potentially being multigraphs. It is an interesting open question whether all intermediate steps can be simple graphs as well, which would imply a version of Theorem \ref{th: markov moves and transpositions} for simple graphs.

Finally, we note that Theorem \ref{th: markov moves and transpositions} may be used to sample graphs with given degree and curvature frequencies. Algorithm \ref{alg: algorithm} shows one possible approach. \begin{algorithm}[h!]
\caption{Sampling graphs with given curvature and degree frequencies}\label{alg: algorithm}
\begin{algorithmic}
\Require graph $G$, Markov basis $\mathcal{M}_{\Delta}$, number of steps $T$, ratio of Markov moves to transpositions $\Theta\in[0,1]$
\State $t\leftarrow 0$
\State $J\leftarrow$ JDM of $G$
\State $S\leftarrow$ slack variables
\While{$t\leq T$}
    \State $p\leftarrow$ sample random variable in $[0,1]$
    \If{$p\leq \Theta$} \Comment{do a Markov move}
        \State $\delta J\leftarrow$ random element in $\pm\mathcal{M}_{\Delta}$
        \While{$J+\delta J\not\geq 0$ or $S-\delta J\not\geq 0$}
            \State $\delta J\leftarrow$ random element in $\pm\mathcal{M}_{\Delta}$
        \EndWhile
        \State $G'\leftarrow $ Markov move of $G$ associated with $\delta J$
        \State $(G,J,S)\leftarrow(G',J+\delta J,S-\delta J)$
    \ElsIf{$p>\Theta$} \Comment{do a transposition}
        \State $G\leftarrow $ transposition of $G$
    \EndIf
    \State $t\leftarrow t+1$
\EndWhile
\State \textbf{Return:} graph $G'$ 
\end{algorithmic}
\end{algorithm}
The performance of any sampling algorithm based on transpositions and Markov moves will depend on how fast the associated Markov chain mixes. The mixing behavior of transpositions is partially understood via its relation to perfect matching sampling \cite{stanton_2012_constructing}, but understanding and controlling the mixing behavior of the Markov chain associated with general Markov bases is much more challenging \cite{almendra-hernandez_2024_markov, windisch_2016_rapid}.
\section{Conclusion and discussion}\label{section: conclusion}
In this work, we consider the problem of constructing graphs with prescribed discrete curvatures. We take a first step in addressing this methodological gap in network geometry and discrete curvature by approaching the problem in two ways. First, in Section \ref{section: approximate sampling} we develop an evolutionary MCMC-type algorithm to construct graphs whose curvatures approximate a given set. Using this algorithm, we then explore the ensemble of graphs with a fixed discrete curvature sequence. Second, in Section \ref{section: markov bases} we solve the exact reconstruction problem for Forman--Ricci curvature and degree sequences. We show that there exists a finite set of graph rewiring moves --- Markov moves and transpositions --- which connect the set of all graphs with a given Forman--Ricci curvature and degree sequence.

Our work opens several new directions for future research. Most results obtained in this article are limited to specific choices of discrete curvatures, but the same questions are equally relevant for all other notions of discrete curvature. More specifically, our algorithmic approach could be extended by optimizing the expensive computations in the mutation/selection phases for specific curvatures. This would make the sampling algorithm practical for a wider range of curvatures and provide a tool to compare among them. Furthermore, a more careful choice of algorithm design might lead to an understanding of and control over the mixing times and convergence rates of the algorithm. 
Finally, in order to sample graphs which more closely resemble a target graph or which can be used as a null model for real-world networks, our current loss function can be enriched by including additional graph statistics. 

Our results on exact reconstruction for Forman--Ricci target sequences are not as directly extendable to other discrete curvatures. The analysis and solution heavily depends on the simple combinatorial nature of Forman--Ricci curvatures and its connection to joint degrees in a graph. However, an important follow-up question would be to find ways to decrease the prohibitively large size of the Markov bases $\mathcal{M}_{\Delta}$. Adding constraints to the curvature or degree sequences and thus restricting to subspaces of $\ker_{\mathbb{Z}}(A_{\Delta})$ is one possible approach. If the Markov basis size can be controlled, this would enable a practical implementation of the proposed Algorithm \ref{alg: algorithm}.
\\~\\
\textbf{Acknowledgements:} The authors thank Florentin M\"unch for a discussion that led to the proof of Proposition \ref{prop: rank of Adelta}.
Michelle Roost was partially supported by a grant from the German-Israeli Foundation for Scientific Research and Development (GIF), grant number 1514.

\appendix
\section{Algebraic background for Markov bases}\label{appendix: background for markov bases}
This brief appendix defines the algebraic concepts that appear in the fundamental theorem of Markov bases, Theorem \ref{thm: fundamental theorem of Markov bases}, and that may not be commonly known. We refer to \cite{MarkovBases, almendra-hernandez_2024_markov} for further background.

Let $K=\mathbb{C}[x_1,\dots,x_n]$ be the ring of polynomials in variables $x_1,\dots,x_n$ and complex coefficients. For a vector $u=(u_1,\dots,u_n)\in\mathbb{N}^n$ we write $\mathbf{x}^u=x_1^{u_1}\cdot\dots\cdot x_n^{u_n}\in K$ for the monomial with exponent vector $u$, and for a vector $\beta\in\mathbb{Z}^n$ we write $\beta^+$ for the vector with entries $(\beta^+)_i = \beta_i$ if $\beta_i\geq 0$ and zero otherwise, and $\beta^-=\beta^+-\beta$. 

An \emph{ideal} in $K$ is a subset $I\subseteq K$ such that $f\in I, g\in K\Rightarrow fg\in I$ and $f,g\in I\Rightarrow f+g\in I$. An ideal $I$ is \emph{generated by} $f_1,\dots,f_k\subseteq I$ if every $g\in I$ can be written as a polynomial combination $g=\sum_{i=1}^k h_i f_i$ for some $h_1,\dots,h_k\in K$.

The fundamental theorem of Markov bases is thus an algebraic statement which says that Markov bases are precisely the generators of some ideal associated with the kernel $\ker_{\mathbb{Z}}(A)$ of an integer matrix $A\in\mathbb{Z}^{d\times n}$; these are called toric ideals.

\section{Example: matrix $A_{5}$ and its Markov basis $\mathcal{M}_5$}\label{appendix: delta=5}
\begin{example}[$\Delta=5$]\label{example: delta=5}
The matrix $A_5$ is a $29\times30$ matrix. It can be written in block-matrix form as
$$
A_{5} = \begin{pmatrix}
B & 0\\
I_{15} & I_{15}
\end{pmatrix},
$$
where $I_{15}$ is the $15\times15$ identity matrix, $0$ is the zero matrix of appropriate dimensions and $B$ is the $14\times 15$ matrix whose rows encode the fixed degree frequencies (first $5$ rows) and fixed curvature frequencies (last $9$ rows) in terms of the $J$ variables:

\[
\setstacktabbedgap{.4em} 
\setstackgap{L}{.9em} 
\TABstackTextstyle{\scriptsize\color{red}}
\fixTABwidth{T}
\savestack\collabels{\tabbedCenterstack{11 & 12 & 13 & 14 & 15 & 22 & 23 & 24 & 25 & 33 & 34 & 35 & 44 & 45 & 55}}
\edef\colwidth{\maxTABwd}
\savestack\rowlabels{\tabbedCenterstack[c]{$\vert V_1\vert$ \\ $\vert V_2\vert$ \\ $\vert V_3\vert$ \\ $\vert V_4\vert$ \\ $\vert V_5\vert$ \\ $\kappa=2$ \\ $\kappa=3$ \\ $\kappa=4$ \\ $\kappa=5$ \\ $\kappa=6$ \\ $\kappa=7$ \\ $\kappa=8$ \\ $\kappa=9$ \\ $\kappa=10$}}
\ensurestackMath{
  B = 
  \stackon{%
    \parenMatrixstack{
    \makebox[\colwidth]{$2$}       &  1 &  1 &  1 &  1 &    &    &    &    &    &    &    &    &    &    \\
      &  1 &    &    &    &  2 &  1 &  1 &  1 &    &    &    &    &    &    \\
      &    &  1 &    &    &    &  1 &    &    &  2 &  1 &  1 &    &    &    \\
      &    &    &  1 &    &    &    &    &    &    &  1 &    &  2 &  1 &    \\
      &    &    &    &  1 &    &    &    &  1 &    &    &  1 &    &  1 &  2 \\
    1 &    &    &    &    &    &    &    &    &    &    &    &    &    &    \\
      &  1 &    &    &    &    &    &    &    &    &    &    &    &    &    \\
      &    &  1 &    &    &  1 &    &    &    &    &    &    &    &    &    \\
      &    &    &  1 &    &    &  1 &    &    &    &    &    &    &    &    \\
      &    &    &    &  1 &    &    &  1 &    &  1 &    &    &    &    &    \\
      &    &    &    &    &    &    &    &  1 &    &  1 &    &    &    &    \\
      &    &    &    &    &    &    &    &    &    &    &  1 &  1 &    &    \\
      &    &    &    &    &    &    &    &    &    &    &    &    &  1 &    \\
      &    &    &    &    &    &    &    &    &    &    &    &    &    &  1
    }%
  }{\collabels}
  \rowlabels
}
\]
\noindent For instance, the first row encodes the equation $2 J_{11}+J_{12}+J_{13}+J_{14}+J_{15}$, which keeps the number $\vert V_1\vert$ of nodes with degree $a=1$ fixed. The $8$th row encodes the equation $J_{13}+J_{22}$, which keeps the number of edges with Forman--Ricci curvature $4-\kappa=4-4=0$ fixed.

The matrix $A_5$ has rank 27 (see Proposition \ref{prop: rank of Adelta}) and using \texttt{4ti2} we compute the following Markov basis $\mathcal{M}_{5}$ consisting of $9$ elements:
\begin{align*}
&\delta J_1&=& &\begin{psmallmatrix*}[r]
0&0&0&0&0\\
\phantom{-}0&0&0&1&-1\\
0&0&-1&1&1\\
0&1&1&-1&0\\
0&-1&1&0&0
\end{psmallmatrix*},\, \qquad
&\delta J_2&=& &\begin{psmallmatrix*}[r]
0&0&0&1&-1\\
0&0&-1&0&1\\
0&-1&1&-1&0\\
1&0&-1&0&0\\
-1&1&0&0&0
\end{psmallmatrix*},\, \qquad
&\delta J_3&=& &\begin{psmallmatrix*}[r]
0&0&0&1&-1\\
0&0&-1&1&0\\
0&-1&0&0&1\\
1&1&0&-1&0\\
-1&0&1&0&0
\end{psmallmatrix*},\, \qquad
\\
&\delta J_4&=& &\begin{psmallmatrix*}[r]
0&0&1&-2&1\\
0&-1&2&0&0\\
1&2&-1&0&-1\\
-2&0&0&1&0\\
1&0&-1&0&0
\end{psmallmatrix*},\, \qquad
&\delta J_5&=& &\begin{psmallmatrix*}[r]
0&0&1&-1&0\\
0&-1&1&0&1\\
1&1&0&-1&-1\\
-1&0&-1&1&0\\
0&1&-1&0&0
\end{psmallmatrix*},\, \qquad
&\delta J_6&=& &\begin{psmallmatrix*}[r]
0&0&1&-1&\phantom{-}0\\
0&-1&1&1&0\\
1&1&-1&0&0\\
-1&1&0&0&0\\
0&0&0&0&0
\end{psmallmatrix*},\, \qquad
\\
&\delta J_7&=& &\begin{psmallmatrix*}[r]
0&0&1&0&-1\\
0&-1&0&0&2\\
1&0&1&-2&-1\\
0&0&-2&1&0\\
-1&2&-1&0&0
\end{psmallmatrix*},\, \qquad
&\delta J_8&=& &\begin{psmallmatrix*}[r]
0&0&1&0&-1\\
0&-1&0&1&1\\
1&0&0&-1&0\\
0&1&-1&0&0\\
-1&1&0&0&0
\end{psmallmatrix*},\, \qquad
&\delta J_9&=& &\begin{psmallmatrix*}[r]
0&0&1&0&-1\\
0&-1&0&2&0\\
1&0&-1&0&1\\
0&2&0&-1&0\\
-1&0&1&0&0
\end{psmallmatrix*}. \, \qquad
\end{align*}
Five bases have $\deg(\delta J)=3$, two bases have $\deg(\delta J)=4$ and two bases have $\deg(\delta J)=5$. 
\end{example}

\section{Rank of the matrix $A_{\Delta}$}\label{appendix: proof of rank Adelta}
Here, we prove Proposition \ref{prop: rank of Adelta}, which states that matrix $A_{\Delta}$ has rank $d-2$, where we recall that $d=\binom{\Delta+1}{2}+3\Delta-1$ is the number of rows of $A$.
\\
\begin{proof} (of Proposition \ref{prop: rank of Adelta})
We first give two independent linear dependencies between the rows of $A_{\Delta}$ to show that $\rnk(A_{\Delta})\leq d-2$. The first linear relation between the rows is
$$
\sum_{a=1}^\Delta(\text{row $\vert V_a\vert$}) - 2\sum_{\kappa=2}^{2\Delta}(\text{row $\kappa$}) = 
\sum_{a=1}^{\Delta} \Biggl(\sum_c J_{ac}+J_{aa}\Biggr)-2\sum_{\kappa=2}^{2\Delta}\Biggl( \sum_{\substack{a+b=\kappa\\a\leq b}}J_{ab}\Biggr) = 0.
$$
The second linear relation between the rows is
\begin{align*}
\sum_{a=1}^\Delta a\times (\text{row $\vert V_a\vert$}) - \sum_{\kappa=2}^{2\Delta}\kappa\times(\text{row $\kappa$}) =\sum_{a=1}^{\Delta}a\left(\sum_{c}J_{ac}+J_{aa}\right) - \sum_{\kappa=2}^{2\Delta}\kappa\Biggl(\sum_{\substack{a+b=\kappa\\a\leq b}}J_{ab}\Biggr)=0.
\end{align*}
To prove that the latter relation holds, we simplify the first sum to $\sum_{v\in V}\deg(v)^2$ and the second sum to $\sum_{e\in E}\kappa(e)$, where $\kappa(\lbrace u,v\rbrace)=\deg(u)+\deg(v)$. We further simplify
\begin{align*}
\sum_{e\in E}\kappa(e) = \sum_{\lbrace u,v\rbrace\in E}(\deg(u)+\deg(v)) =  \sum_{v\in V}\sum_{u\sim v}\deg(v) =  \sum_{v\in V}\deg(v)^2,
\end{align*}
which confirms that the second linear relation equals zero. Next, we show that the matrix $A_{\Delta}$ has a nonsingular $(d-2)$-minor $\tilde{A}$. This implies that $\rnk(A_{\Delta})\geq d-2$ which then completes the proof.

First, note that $A_{\Delta}$ can be written as $A_{\Delta}=\left(\begin{smallmatrix}B&0\\I & I\end{smallmatrix}\right)$, where $0$ is the zero matrix, $I$ is the identity matrix of size $\binom{\Delta+1}{2}$ and $B$ a $(\Delta-1)\times\binom{\Delta+1}{2}$ matrix. The rows corresponding to the identity matrices encode the fixed slack variables and the rows corresponding to $B$ encode the fixed degree and curvature frequencies; as a result, the columns of $B$ correspond to $J$ variables. See Examples \ref{example: delta=4} and \ref{example: delta=5} for a concrete example. We show that $B$ has a nonsingular $(3\Delta-3)$-minor $\tilde{B}$ which implies that $A$ has the required nonsingular $(d-2)$-minor $\tilde{A}=\left(\begin{smallmatrix}\tilde{B}&0\\I&I\end{smallmatrix}\right)$.

Consider the following submatrices of $B$:
\[
\setstacktabbedgap{-0.2em} 
\setstackgap{L}{.9em} 
\TABstackTextstyle{\scriptsize\color{red}}
\fixTABwidth{T}
\savestack\collabels{\tabbedCenterstack{$\Delta\Delta$ & $(\Delta-1)\Delta$ & $(\Delta-1)(\Delta-1)$ }}
\edef\colwidth{\maxTABwd}
\savestack\rowlabels{\tabbedCenterstack[l]{$\kappa=2\Delta$ \\ $\kappa=2\Delta-1$ \\ $\kappa=2\Delta-2$}}
\ensurestackMath{
  B_1 = 
  \stackon{%
    \parenMatrixstack{
    \makebox[\colwidth]      
    1 & 0 & 0\\
    0 & 1 & 0\\
    0 & 0 & 1
    }%
  }{\collabels}
  \rowlabels
}
\]
\[
\setstacktabbedgap{0.4em} 
\setstackgap{L}{.9em} 
\TABstackTextstyle{\scriptsize\color{red}}
\fixTABwidth{T}
\savestack\collabels{\tabbedCenterstack{$(\Delta-i)(\Delta+2-i)$ & $(\Delta-i)(\Delta+1-i)$ & $(\Delta-i)(\Delta-i)$ }}
\edef\colwidth{\maxTABwd}
\savestack\rowlabels{\tabbedCenterstack[l]{$\vert V_{\Delta-i}\vert$ \\ $\kappa=2\Delta-2i + 1$ \\ $\kappa=2\Delta-2i$}}
\ensurestackMath{
  B^i = 
  \stackon{%
    \parenMatrixstack{
    \makebox[\colwidth]      
    1 & 1 & 2\\
    0 & 1 & 0\\
    0 & 0 & 1
    }%
  }{\collabels}
  \rowlabels
}
\]
for $i\in\lbrace 2,\dots, \Delta-1\rbrace$. After reordering rows and columns, we can write matrix $\tilde{B}$ as
\[
    \tilde{B}=\left(
    \begin{array}{ccccc}
    B_1                                    \\
      & B^2             &   & \text{\huge$\star$}\\
      &               & \ddots                \\
      & \text{\huge0} &   & B^{(\Delta-1)}
    \end{array}
    \right),
\]  
where $\star$ indicates entries that are not relevant. This is by construction a ($3\Delta-3$)-minor of $B$. It is an upper-triangular matrix with ones along the diagonal, which means that it is nonsingular. As noted, this implies that the $(d-2)$-minor $\tilde{A}$ is nonsingular, which completes the proof.
\end{proof}

\newpage
\bibliographystyle{abbrv}
\bibliography{bibliography.bib}
\end{document}